\documentclass[10pt,twocolumn,twoside]{IEEEtran}
\usepackage{cite}
\usepackage{amsmath,amssymb,amsfonts}
\usepackage{graphicx}
\usepackage{textcomp}
\usepackage{mathrsfs} 


\hyphenation{op-tical net-works semi-conduc-tor}
\usepackage{tabulary}
\usepackage{blindtext}
\usepackage[font=footnotesize]{caption}
\usepackage{subfig}
\usepackage{amsthm}  
\usepackage{dsfont}
\usepackage{algorithm2e}
\RestyleAlgo{ruled}
\usepackage{algpseudocode}
\usepackage{tabularx}
\usepackage{float}
\usepackage{bbm} 

\newcommand{\vect}[1]{\boldsymbol{#1}}
\newcommand{\mat}[1]{\boldsymbol{#1}}

\newcommand{\wh}[1]

\DeclareMathOperator{\diag}{\text{diag}}
\DeclareMathOperator*{\argmax}{\rm{argmax}} 
\renewcommand{\eqref}[1]{Eq.~(\ref{#1})}  

\newtheorem{remark}{Remark}
\newtheorem{theorem}{Theorem}
\newtheorem{lemma}{Lemma}
\newtheorem{assumption}{Assumption}

\newtheorem{corollary}{Corollary}
\newtheorem{proposition}{Proposition}

\newtheorem{problem}{Problem}

\allowdisplaybreaks
\def\BibTeX{{\rm B\kern-.05em{\sc i\kern-.025em b}\kern-.08em
    T\kern-.1667em\lower.7ex\hbox{E}\kern-.125emX}}
\begin{document}
\title{
Controlling a Social Network of Individuals with Coevolving Actions and Opinions}

\author{Roberta Raineri, Mengbin Ye, and Lorenzo Zino
\thanks{R. Raineri and  L. Zino are with the Department of Electronics and Telecommunications, Politecnico di Torino, Turin, Italy (\texttt{\{roberta.raineri,lorenzo.zino\}@ polito.it}).  M. Ye is with the School of Computer and Mathematical Sciences, University of Adelaide, Adelaide, Australia (\texttt{ben.ye@adelaide.edu.au}). 
This work was partially supported by the Western Australian Government through the Premier's Science Fellowship Program.  }%
}

\maketitle

\begin{abstract}
In this paper, we consider a population of individuals who have actions and opinions, which coevolve, mutually influencing one another on a complex network structure. In particular, we formulate a control problem for this social network, in which we assume that we can inject into the network a committed minority ---a set of stubborn nodes--- with the objective of steering the population, initially at a consensus, to a different consensus state. Our study focuses on two main objectives: i) determining the conditions under which the committed minority succeeds in its goal, and ii) identifying the optimal placement for such a committed minority. After deriving general monotone convergence result for the controlled dynamics, we leverage these results to build a computationally-efficient algorithm to solve the first problem and an effective heuristics for the second problem, which we prove to be NP-complete. For both algorithms, we establish theoretical guarantees. The proposed methodology is illustrated though academic examples, and demonstrated on a real-world case study. 
\end{abstract}

\section{Introduction}\label{sec:intro}

Over the past decades, the systems and control community have witnessed a growing interest in developing and analyzing mathematical models to study, forecast, and control complex social phenomena and collective human behavior~\cite{friedkin2015_socialsurvey,proskurnikov2017tutorial,proskurnikov2018tutorial_2,Fontan2018,DePasquale2022,Bizyaeva2023,Bernardo2024}. Within this general effort, particular interest has been devoted to collective decision-making, whereby a population of individuals have to repeatedly make decisions on a specific action to take (often binary) on the basis of several factors, including their opinions on the considered action. For instance, this scenario often arises in different contexts of social change problems: individuals may decide whether to use a disposable cup or a reusable cup to have a coffee, or whether to use inclusive language or not when writing an email. In these contexts, empirical evidence and social psychology theories suggest that decision-making is deeply intertwined with opinion formation processes~\cite{gavrilets2017collective,lindstrom2018role}. This calls for the development of model paradigms able to integrate opinion formation processes within the decision-making model in a coevolutionary fashion.

The continuous-opinion discrete-action model is a first step in this direction. The seminal paper by Martins~\cite{Martins2008coda} and its main extensions~\cite{Ceragioli2018quantized,Tang2021coda} built on classical opinion dynamics models~\cite{proskurnikov2017tutorial}, and rely on the assumption that the opinion formation process entirely shapes the decision-making, whereby actions are a quantization of opinions. Despite relevant for many applications, this assumption limits the possibility to capture the presence of a misalignment between individuals' personal opinions and their actions, which is often observed in real-world social systems. This is the case, e.g., of the phenomenon of unpopular norms~\cite{centola2005emperor,willer2009false_enforcement}, whereby a community keeps exhibiting a collective behavior that is disapproved by the most of its members. This limitation was addressed in~\cite{zino2020chaos,Hassan2023tac}, where a coevolutionary model of actions and opinions was proposed. This model is built by incorporating an opinion formation process within a game-theoretic framework used to model decision-making~\cite{montanari2010spread_innovation,Jackson2015}, and allows individuals to simultaneously revise their (binary) actions and share their opinions on their support for the action on a complex network, accounting for social pressure, opinion influence, and self-consistency. The analysis of this model has shown its ability to reproduce several real-life phenomena, including the emergence and persistence of unpopular norms and polarization.

In this paper, we take a step further from the analysis of social systems to their control. In the literature, the two most common approaches for controlling social systems are to assume that one can either i) provide monetary or societal incentives to favor a desired opinion/action over the others~\cite{guo2013algebraic,Riehl2016,riehl2018incentive,Quijano2017,bacsar2024inducement} or ii) directly control a committed minority of stubborn agents. Here, we focus on the second type of control action, which has been extensively studied in the context of opinion dynamics~\cite{ghaderi2014opinion,Yuhao2021,Wang2023} and decision-making~\cite{kempe2003_maxspread,montanari2010spread_innovation,centola2018experimental_tipping,ye2021nat,Como2022supermodular}, but it is still unexplored for the coevolutionary model.

Building on the model proposed in~\cite{Hassan2023tac}, we focus on the control problem of unlocking a paradigm shift in a population by means of a committed minority. Specifically, we consider a population initially at a consensus in which all individuals select and support the same action. Then, we introduce a committed minority of stubborn agents with the goal of steering the entire population to a consensus on the opposite action. Stubborn agents may consistently select the opposite action, share opinions supporting it, or both. This problem is relevant to many real-life applications. For instance, unlocking a paradigm shift is key for social change, e.g., to favor the collective transition towards more sustainable practices. From the opposite perspective, understanding how a committed minority can steer an entire population to a desired collective behavior is key to guaranteeing robustness of social systems against malicious attacks~\cite{Schneider2011}.

By leveraging systems and control theoretic tools, we study the controlled model and we establish an array of properties and theoretical results, including monotonic convergence to an equilibrium point. Building on these theoretical findings, we establish necessary and sufficient conditions for the set of stubborn nodes to unlock a paradigm shift, depending on the model parameters and on the network structure. Then, we deal with the problem of identifying the minimal committed minority needed to control the network. After demonstrating analytically that the problem is NP-complete, we use our theoretical results and we get inspiration from~\cite{Como2022supermodular} to design an efficient iterative algorithm for its solution.

In details, the main contribution of this paper is four-fold. First, we incorporate a control action in the coevolutionary model~\cite{Hassan2023tac} and we formulate two control problems: determining if a control action is sufficient to steer the population to the desired consensus (\emph{effectiveness guarantee problem}), and identifying the minimal set of nodes to be controlled to achieve such a goal (\emph{minimal control set identification problem}). Second, we prove an array of general properties for the controlled dynamics, including convergence, and we characterize the complexity of our research problems, demonstrating that the minimal control set identification problem is NP-complete, which limits the possibility to adopt classical heuristics to approximate its solution. Third, we propose an algorithm to solve the effectiveness guarantee problem and, after prove its effectiveness analytically, we use it to evaluate the impact of the model parameters on a synthetic case study, where a closed-form solution of the problem can be derived. Fourth, we propose an iterative algorithm for the minimal control set identification problem with probabilistic convergence properties, and we demonstrate its efficiency on a case study, whose network topology is reconstructed from real-world face-to-face contact data~\cite{malawi_net}. 

Some of the results of this paper appeared, in a preliminary form, in~\cite{raineri2024}. Here, besides expanding the introduction to frame our contribution within the related literature and applications, we extend our preliminary results along several lines. First, we generalize the control action, allowing only part of the state of agents to be controlled. Second, we demonstrate that the minimal control set identification problem is NP-complete and it is not sub-modular (Theorem~\ref{th:np} and Proposition~\ref{prop:sub}). Third, we extend the algorithm proposed in~\cite{raineri2024} to solve the effectiveness guarantee problem to the more general control setting considered in this paper, and we prove its effectiveness in Theorem~\ref{th:algorithm}, which was missing in our previous publication. Fourth, we propose a novel algorithm to solve the minimal control set identification problem for general networks and, after proving its effectiveness in Theorem~\ref{th:convergence_MC}, we demonstrate it on a realistic case study.

The rest of the paper is organized as follows. In Section~\ref{sec:model}, we introduce the controlled coevolutionary dynamics and we formulate our two research problems. In Section~\ref{sec:results}, we present some general results on the uncontrolled and controlled coevolutionary dynamics. Sections~\ref{sec:problem1} and~\ref{sec:problem2} are devoted to the solution of the effectiveness guarantees and the minimal control set problems, respectively. In Section~\ref{sec:case}, we present a numerical case study. Section~\ref{sec:conclusion} concludes the paper.

\section{Model and Problem Statement}\label{sec:model}

\textit{Notation}.  We denote a vector $\vect{x}$ with bold lowercase font, with $x_{i}$ its $i$th entry; and a matrix $\mat{A}$ with bold capital font, and $a_{ij}$ the $j$th entry of its $i$th row. 
The all-$1$ column vector is denoted as $\vect{1}$, with appropriate dimension depending on the context. Given two vectors $\vect{x},\vect{y}$ with same dimension, we use  $\vect{x}\leq\vect{y}$ to denote $x_i\leq y_i$, for all entries $i$.

\subsection{(Uncontrolled) Coevolutionary Model}
\label{sec:dynamics}
We consider a population  $\mathcal V = \{1,\hdots,n\}$ of $n$ individuals. Each  $i\in\mathcal V$ is associated with a two-dimensional state variable $(x_i(t),y_i(t))\in\{-1,+1\}\times[-1,+1]$, with discrete time $t$: $x_i(t)\in\{-1,+1\}$ represents the \emph{action} of individual $i$ at time $t$, $y_i(t)\in[-1,+1]$ their \emph{opinion} on the action ($y_i(t)=-1$ means that $i$ is totally in favor of action $-1$,  $y_i(t)=+1$ that $i$ fully supports action $+1$). Actions and opinions are gathered in vectors $\vect{x}(t)\in\{-1,1\}^n$ and  $\vect{y}(t)\in[-1,1]^n$, and the state of the system is fully represented by the joint $2n$-dimensional vector $\vect{z}(t):=(\vect{x}(t),\vect{y}(t))\in\{-1,1\}^n\times [-1,1]^n$. Given an individual $i\in\mathcal V$, we define as $\vect{z_{-i}}:=(\vect{x_{-i}},\vect{y_{-i}})\in\{-1,1\}^{n-1}\times [-1,1]^{n-1}$ the $(2n-2)$-dimensional vector with the state of all other individuals. At each time step $t$, we define a set $\mathcal R(t)\subseteq \mathcal V$, and we assume that all individuals in this set  simultaneously revise their state at time $t$. 

\begin{assumption}[Revision sequence]\label{a:activation}
   There exists a constant $T<\infty$ such that $\cup_{s=0}^{T-1} \mathcal R(t+s)=\mathcal V$, for any $t\geq 0$.
\end{assumption}

\begin{remark}\label{rem:act}
Assumption~\ref{a:activation} encompasses many synchronous and asynchronous update rules: for synchronous update rules, $\mathcal R(t)=\mathcal V$ for all $t$; for asynchronous update rules, $\mathcal R(t)$ comprises a single individual. The requirement $\cup_{s=0}^{T-1} \mathcal R(t+s)=\mathcal V$ closely mirrors typical assumptions in opinion dynamics with time-varying networks~\cite[Section~3]{proskurnikov2018tutorial_2}, as it ensures that over $T$ consecutive time steps, every agent activates at least once. Note that no restrictions are imposed on the ordering of the agent activations, allowing for both deterministic or stochastic mechanisms to determine the activation sequence. 
\end{remark}

At time $t$, each individual $i\in\mathcal R(t)$ updates their state, aiming to maximize the utility function defined in~\cite{Hassan2023tac}, that accounts for three contributions: 
 i) individuals' tendency to coordinate actions;
  ii) opinions exchanged with peers; and
iii) an individual's tendency to have consistently between their action and opinion. Following~\cite{Hassan2023tac}, we define the utility that $i$ receives for selecting an  action and opinion pair $\vect{z_i}=(x_i,y_i)$ when the state of the others is $\vect{z_{-i}}$ as
\begin{equation}\label{eq:utility}
{{
\begin{array}{l}
    u_i(\vect{z_i},\vect{z_{-i}})=\frac{\lambda_i (1-\beta_i)}{2}\displaystyle\sum_{j \in \mathcal V} a_{ij} \big[ (1-x_j)(1-x_i + (1+x_j)\cdot\\(1+x_i) \big]  -\beta_i (1-\lambda_i)\displaystyle \sum_{j \in \mathcal V}{w}_{ij}(y_i-y_j)^2\hspace{-.1cm} -\hspace{-.1cm} \lambda_i\beta_i(x_i-y_i)^2\hspace{-.1cm},
\end{array}}}
\end{equation}
where $a_{ij} \in [0,1]$ and $w_{ij} \in [0,1]$ are the influence of individual $j$'s action and opinion, respectively; and $\lambda_i \in (0,1]$ and $\beta_i\in(0,1]$ the weights given to actions observed and opinions exchanged, respectively. The quantities $a_{ij}$ and $w_{ij}$ are gathered into two matrices $\mat A$ and $\mat W$, which we assume to be stochastic (i.e., $\mat{A}\vect{1}=\mat{W}\vect{1}=\vect{1}$). Such a structure induce a two-layer network $\mathcal  G = (\mathcal V, \mathcal E_A, \mat{A}, \mathcal E_W, \mat{W})$, where $\mathcal E_A$ are the edges on the influence layer on which individuals see others' actions and $\mathcal E_A$ are the edges on the communication layer, on which individuals discuss about their opinions. All parameters are summarized in Table~\ref{tab:parameters}. Before explicitly presenting the uncontrolled coevolutionary dynamics, we make some  observations on \eqref{eq:utility}, with proof  reported in Appendix~\ref{sec:proof_super}.


\begin{proposition}\label{prop:super}
    A game with the utility function in \eqref{eq:utility} is supermodular. 
\end{proposition}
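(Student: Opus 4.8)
The plan is to verify the three defining conditions of a supermodular game in the sense of Topkis: (i) each player's strategy set is a lattice, (ii) the utility $u_i$ is supermodular in the player's own strategy $\vect{z_i}$, and (iii) $u_i$ has increasing differences between $\vect{z_i}$ and the others' strategies $\vect{z_{-i}}$. First I would observe that player $i$'s strategy set $\{-1,+1\}\times[-1,+1]$ is the product of two chains and hence a lattice under the componentwise order, so (i) holds immediately, and the joint strategy space of the game is the corresponding product lattice.

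Second, I would bring \eqref{eq:utility} into a form in which the complementarities can be read off directly. The key observation is that, since $x_i,x_j\in\{-1,+1\}$, the coordination bracket collapses: $(1-x_j)(1-x_i)+(1+x_j)(1+x_i)=2+2x_ix_j$, so the first term reduces to $\lambda_i(1-\beta_i)\sum_{j} a_{ij}(1+x_ix_j)$. Expanding the two remaining quadratic terms and using $x_i^2=1$, the utility becomes affine in $x_i$ plus a quadratic form in the opinions, whose only products of two distinct coordinates are $x_ix_j$ (with coefficient $\lambda_i(1-\beta_i)a_{ij}$), $y_iy_j$ (with coefficient $2\beta_i(1-\lambda_i)w_{ij}$, coming from the cross term of $-(y_i-y_j)^2$), and $x_iy_i$ (with coefficient $2\lambda_i\beta_i$, coming from the cross term of $-(x_i-y_i)^2$).

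Third, I would invoke the standard characterization that supermodularity and increasing differences are equivalent to the nonnegativity of the coefficient of every product of two distinct coordinates: for the continuous opinion coordinates this is the mixed second derivative, while for the binary action coordinate the expanded utility is affine, so the increasing-differences test reduces to the sign of the same coefficient. The coefficient of $x_iy_i$ is $2\lambda_i\beta_i\geq 0$, giving supermodularity in $i$'s own strategy; the coefficients of $x_ix_j$ and $y_iy_j$ are $\lambda_i(1-\beta_i)a_{ij}\geq 0$ and $2\beta_i(1-\lambda_i)w_{ij}\geq 0$, giving increasing differences between $i$'s action and each neighbor's action and between $i$'s opinion and each neighbor's opinion; and the cross coefficients of $x_iy_j$ and $y_ix_j$ for $j\neq i$ are identically zero. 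Since all relevant coefficients are nonnegative under the parameter ranges $\lambda_i,\beta_i\in(0,1]$ and $a_{ij},w_{ij}\in[0,1]$, conditions (ii) and (iii) hold and the game is supermodular.

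I expect the only delicate point to be the mixed discrete–continuous nature of the strategy lattice: the action coordinate is binary, so one cannot blindly differentiate in $x_i$. The clean way around this—and the reason the expansion in the second step is worth carrying out—is precisely that after expansion $u_i$ is affine in $x_i$, so the increasing-differences inequalities involving $x_i$ reduce to sign conditions on linear coefficients, matching exactly the cross-derivative conditions used for the continuous opinion coordinates. A secondary point to check is that the self-terms do not interfere, since the $j=i$ contribution to the opinion sum vanishes and that to the coordination sum is constant.
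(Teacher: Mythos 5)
Your proof is correct, and its computational core coincides with the paper's: both arguments reduce to observing that the only products of distinct coordinates appearing in \eqref{eq:utility} are $x_ix_j$, $y_iy_j$, and $x_iy_i$, with nonnegative coefficients $\lambda_i(1-\beta_i)a_{ij}$, $2\beta_i(1-\lambda_i)w_{ij}$, and $2\lambda_i\beta_i$. The packaging differs: the paper computes the utility difference $\Delta_i(\vect{z_i}',\vect{z_i},\vect{z_{-i}})=u_i(\vect{z_i}',\vect{z_{-i}})-u_i(\vect{z_i},\vect{z_{-i}})$ explicitly and shows it is nondecreasing in $\vect{z_{-i}}$, which directly certifies increasing differences without fully expanding the quadratic form, whereas you expand the utility and read off the sign of each pairwise coefficient. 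Your version is slightly more complete on one point: a supermodular game also requires $u_i$ to be supermodular in player $i$'s own two-dimensional strategy $(x_i,y_i)$, and you verify this explicitly via the $x_iy_i$ coefficient $2\lambda_i\beta_i\geq 0$ (handling the discrete $x_i$ coordinate correctly by noting the expanded utility is affine in $x_i$), while the paper's appendix proof only records the increasing-differences condition between $\vect{z_i}$ and $\vect{z_{-i}}$ and leaves the own-strategy condition implicit. Your observation that the cross coefficients of $x_iy_j$ and $y_ix_j$ for $j\neq i$ vanish is also consistent with the paper's $\psi(\vect{z_i}',\vect{z_i})$ absorbing all terms independent of $\vect{z_{-i}}$.
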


\begin{table}    \caption{Models variables and parameters. }
    \label{tab:parameters}
    \centering
\begin{tabular}{r|l}
$x_i(t)\in\{-1,+1\}$&action of individual $i$ at time $t$\\
$y_i(t)\in[-1,+1]$&opinion of individual $i$ at time $t$\\
$a_{ij}\in[0,1]$&influence of $j$'s action on $i$\\
$w_{ij}\in[0,1]$&influence of $j$'s opinion on $i$\\
$\lambda_i\in(0,1]$&weight of actions\\
$\beta_i\in(0,1]$&weight of opinions\\
    \end{tabular}
\end{table}

In \eqref{eq:utility}, we enforce $\lambda_i>0$ and $\beta_i>0$ to guarantee a nontrivial coupling between the two variables. In the limit case in which one of these parameters is equal to $0$, the coevolutionary model would reduce to a simpler (and well-known) dynamics, as commented in the following.

\begin{remark}
   The utility in \eqref{eq:utility} generalizes classical network majority (and coordination) games~\cite{montanari2010spread_innovation,Jackson2015} (obtained for  $\lambda_i=1$) and the French-DeGroot opinion dynamics model~\cite{marden2009game,proskurnikov2017tutorial} (for $\beta_i=1$). See,~\cite{Hassan2023tac} for more details.
\end{remark}

We are now ready to present the coevolutionary dynamics, in which agents who activate seek to maximize their utility function in \eqref{eq:utility}. Consequently, for each $i\in\mathcal V$, the action and opinion are revised as follows:
\begin{equation}\label{eq:br}x_i(t+1),y_i(t+1)\hspace{-.1cm}=\hspace{-.1cm}\left\{\hspace{-.1cm}\begin{array}{ll}\displaystyle\argmax_{\vect{z_i}\in\{-1,1\}\times[-1,1]}\hspace{-.4cm}{u_i(\vect{z_i},\vect{z_{-i}})} &i\in\mathcal R(t),\\
x_i(t),y_i(t)&i\notin \mathcal R(t),\end{array}\right.
\end{equation}
with the convention that, when the $\argmax u_i(\vect{z_i},\vect{z_{-i}})$ comprises multiple elements, we set $x_i(t+1)=x_i(t)$. In other words, each individual $i\in\mathcal R(t)$ performs a  joint best-response with respect to \eqref{eq:utility}, as illustrated in Fig.~\ref{fig:schema}.

\begin{figure}
    \centering
\includegraphics{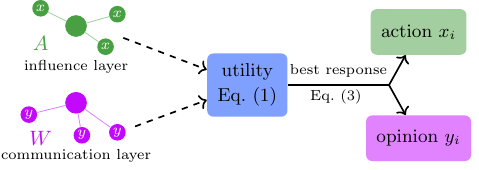}   \caption{Schematic of the update rule for a generic individual $i\in\mathcal R(t)$. }
    \label{fig:schema}
\end{figure}

In Proposition~\ref{prop:super}, we proved that the game is supermodular, which yields some important properties that can be used to prove convergence of \eqref{eq:br}.  However, to characterize its equilibria, we need to derive a closed-form expression for \eqref{eq:br}, following~\cite{Hassan2023tac,raineri2024}.

\begin{proposition}\label{prop:dynamics}
Individual $i\in\mathcal R(t)$ updates their state as:
\begin{subequations}\label{eq:dinamics}
\begin{align}
\label{x-dinamic}
&x_i(t+1) = s(\vect z(t)),\\
\label{y-dinamic}
&y_i(t+1)=(1-\lambda_i)\sum\nolimits_{j \in \mathcal V} {w}_{ij}y_j(t) + \lambda_i s(\vect{z}(t)),
\end{align}
\end{subequations}
where 
$$s(\vect z(t))=\begin{cases}
    +1 \qquad &\text{if}\; \delta_i(\vect z(t)) >0, \\
    -1 \qquad &\text{if}\; \delta_i(\vect z(t)) <0, \\
    x_i(t) \qquad &\text{if}\; \delta_i(\vect z(t)) =0,
\end{cases}
$$
with
$$
\delta_i(\vect z(t)) = 2 \beta_i (1-\lambda_i)\sum_{j \in \mathcal V} {w}_{ij} y_j(t) + (1- 
 \beta_i) \sum_{j \in \mathcal V} a_{ij} x_j(t).
$$
\end{proposition}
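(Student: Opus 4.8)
The plan is to reduce the two-dimensional maximization in \eqref{eq:br} to a scalar comparison between the two admissible actions $x_i=\pm1$, exploiting the separable structure of \eqref{eq:utility}. First I would fix the action $x_i$ and optimize over the opinion $y_i\in[-1,1]$. Only the last two terms of \eqref{eq:utility} depend on $y_i$, and together they form a strictly concave quadratic (a negative weighted sum of squares). Setting the derivative to zero and using the row-stochasticity $\mat W\vect1=\vect1$ yields the unconstrained maximizer
$$y_i^\star(x_i)=(1-\lambda_i)\sum_{j\in\mathcal V}w_{ij}y_j+\lambda_i x_i.$$
Because $y_i^\star(x_i)$ is a convex combination (with weights $1-\lambda_i$ and $\lambda_i$) of $\sum_j w_{ij}y_j\in[-1,1]$ and $x_i\in\{-1,1\}\subset[-1,1]$, it automatically lies in $[-1,1]$; hence the interval constraint is inactive and the best-response opinion is exactly the right-hand side of \eqref{y-dinamic}.

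Next I would simplify the action term. Expanding the bracket gives $(1-x_j)(1-x_i)+(1+x_j)(1+x_i)=2+2x_ix_j$, so the first term of \eqref{eq:utility} equals $\lambda_i(1-\beta_i)\sum_j a_{ij}(1+x_ix_j)$, which, after using $\mat A\vect1=\vect1$, depends on $x_i$ only through $x_i\sum_j a_{ij}x_j$. Substituting $y_i^\star(x_i)$ back into \eqref{eq:utility} then produces a reduced utility $U(x_i):=\max_{y_i}u_i(\vect z_i,\vect z_{-i})$. The crucial bookkeeping observation is that, since $x_i^2=1$, almost every resulting term is independent of $x_i$: the only $x_i$-dependent contributions are $\lambda_i(1-\beta_i)\,x_i\sum_j a_{ij}x_j$ from the action part and $2\beta_i\lambda_i(1-\lambda_i)\,x_i\sum_j w_{ij}y_j$, the latter coming from the linear cross term in $(y_i^\star)^2$.

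Collecting these, I would compute
$$U(+1)-U(-1)=2\lambda_i\Big[(1-\beta_i)\sum_{j\in\mathcal V}a_{ij}x_j+2\beta_i(1-\lambda_i)\sum_{j\in\mathcal V}w_{ij}y_j\Big]=2\lambda_i\,\delta_i(\vect z(t)).$$
Since $\lambda_i>0$, the sign of the utility gap coincides with the sign of $\delta_i$: if $\delta_i>0$ the optimal action is $+1$, if $\delta_i<0$ it is $-1$, and if $\delta_i=0$ the two pairs $(\pm1,y_i^\star(\pm1))$ tie. As $y_i^\star(+1)-y_i^\star(-1)=2\lambda_i>0$, these two maximizers are distinct, so the tie-breaking convention of \eqref{eq:br} applies and sets $x_i(t+1)=x_i(t)$; this reproduces the definition of $s(\vect z(t))$ in \eqref{x-dinamic}. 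Substituting $x_i(t+1)=s(\vect z(t))$ into $y_i^\star$ then gives \eqref{y-dinamic}. I expect the main obstacle to be the careful tracking of which terms survive the substitution of $y_i^\star$, in particular matching the factor $2$ and the coefficient $2\beta_i(1-\lambda_i)$ in $\delta_i$; the concavity and feasibility checks are routine.
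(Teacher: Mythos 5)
Your proof is correct and follows the standard two-step reduction (first maximize the strictly concave quadratic in $y_i$ for fixed $x_i$, then compare the reduced utilities of $x_i=\pm 1$) that the paper itself does not spell out but delegates to the references it cites; the key identities $y_i^\star(x_i)=(1-\lambda_i)\sum_{j}w_{ij}y_j+\lambda_i x_i$ and $U(+1)-U(-1)=2\lambda_i\,\delta_i(\vect z(t))$ both check out, as do the feasibility of the unconstrained maximizer and the tie-breaking discussion. One small imprecision in the bookkeeping: the net coefficient $2\beta_i\lambda_i(1-\lambda_i)$ multiplying $x_i\sum_j w_{ij}y_j$ does not come from the cross term of $(y_i^\star)^2$ alone (that term contributes $-2\beta_i\lambda_i(1-\lambda_i)^2$) but from combining it with the cross terms of $-2y_i^\star\sum_j w_{ij}y_j$ and of $(x_i-y_i^\star)^2=(1-\lambda_i)^2\bigl(x_i-\sum_j w_{ij}y_j\bigr)^2$, although the total you report is the right one.
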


From Proposition~\ref{prop:dynamics}, we derive the following observation.

\begin{proposition}[Proposition 3 from~\cite{raineri2024}]\label{prop:equilibria}
The (uncontrolled) coevolutionary dynamics in \eqref{eq:br} has at least two equilibria: $\vect{x}=\vect{y}=-\vect{1}$ and $\vect{x}=\vect{y}=\vect{1}$, being the unique equilibria in which the action vector is at a consensus ($x_i=x_j$, $\forall\,i,j\in\mathcal V$). 
\end{proposition}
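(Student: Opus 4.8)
The plan is to split the claim into two halves: first verify that $\vect x=\vect y=\pm\vect 1$ are equilibria, and then show that every equilibrium whose action vector is at consensus must be one of these two. Throughout I would use that a state is an equilibrium exactly when it is a fixed point of the update in Proposition~\ref{prop:dynamics} for \emph{every} $i$: by Assumption~\ref{a:activation} each agent activates within $T$ steps, so any agent whose update would change the state eventually acts, whereas if no agent's update changes anything the state persists forever.

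For the first half, I would substitute $\vect x=\vect y=\vect 1$ into the map. Using that $\mat A$ and $\mat W$ are row-stochastic, I get $\delta_i(\vect z)=2\beta_i(1-\lambda_i)+(1-\beta_i)\geq 0$ for every $i$, so $s(\vect z)=+1$ (the boundary case $\delta_i=0$, which can arise only when $\beta_i=\lambda_i=1$, still returns $s=x_i=+1$). The opinion update then gives $y_i(t+1)=(1-\lambda_i)+\lambda_i=1$, leaving the state unchanged. Since the map in~\eqref{eq:dinamics} is odd under $(\vect x,\vect y)\mapsto(-\vect x,-\vect y)$ — both $\delta_i$ and the opinion update~\eqref{y-dinamic} flip sign — the same computation yields that $\vect x=\vect y=-\vect 1$ is an equilibrium.

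For the second half, I take an arbitrary equilibrium with action consensus $\vect x=c\vect 1$, $c\in\{-1,+1\}$. Action stability forces $x_i=s(\vect z)=c$ for all $i$, so substituting $s(\vect z)=c$ into~\eqref{y-dinamic} collapses the opinion fixed-point condition into the linear system $\vect y=(\mat I-\mat\Lambda)\mat W\vect y+c\,\mat\Lambda\vect 1$, where $\mat\Lambda=\diag(\lambda_i)$. The crux is to argue this system has the unique solution $\vect y=c\vect 1$. I would observe that $(\mat I-\mat\Lambda)\mat W$ is entrywise nonnegative with row sums $(1-\lambda_i)<1$, where strictness comes from $\lambda_i>0$; hence its spectral radius is strictly below $1$ and $\mat I-(\mat I-\mat\Lambda)\mat W$ is invertible. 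A direct check using stochasticity of $\mat W$ shows $\vect y=c\vect 1$ solves the system, so by invertibility it is the \emph{only} solution.

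Finally I would confirm the recovered point is a genuine equilibrium rather than a spurious root: at $\vect x=\vect y=c\vect 1$ one has $\delta_i=c\,[2\beta_i(1-\lambda_i)+(1-\beta_i)]$, whose sign matches $c$, so $s(\vect z)=c$ is consistent with $x_i=c$. I expect the invertibility/uniqueness step to be the only real obstacle; everything else reduces to substitution together with the row-stochasticity of $\mat A$ and $\mat W$.
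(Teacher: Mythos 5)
Your proof is correct. The paper itself does not include a proof of this proposition (it is imported verbatim as Proposition~3 of the cited conference paper~\cite{raineri2024}), but your argument is the natural one: direct substitution plus the odd symmetry of the update map for existence, and for uniqueness the reduction of the opinion fixed-point condition to the linear system $\vect y=(\mat I-\diag(\vect\lambda))\mat W\vect y+c\,\diag(\vect\lambda)\vect 1$, whose coefficient matrix is invertible because the row sums $1-\lambda_i$ are strictly below $1$ (using $\lambda_i>0$). All the delicate points — the tie-breaking rule when $\delta_i=0$, and the final consistency check that the recovered $(c\vect 1,c\vect 1)$ really satisfies the action stability condition — are handled correctly.
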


\subsection{Controlled dynamics and problem statement}


We study a scenario in which, at time $t=0$, the population is at one consensus equilibrium and we want to steer it to the opposite one. Without loss of generality, assume the starting consensus is $\vect{x}(0)=\vect{y}(0)=-\vect{1}$, and thus our goal is to reach $\vect{x}=\vect{y}=+\vect{1}$. We consider this problem from the perspective of a policymaker/designer, and assume that our control lever is in the form of directly acting on the state of a subset of agents by setting their opinion and/or action to $+1$ for all $t\geq 1$, yielding the following assumption (illustrated in Fig.~\ref{fig:control}).

\begin{assumption}[Controlled dynamics]
\label{a:initial_condition}
Consider a two-layer network $\mathcal G=(\mathcal V,\mathcal E_A,\mat{A},\mathcal E_W,\mat{W})$ with $\mat{A}$ and $\mat{W}$ stochastic and irreducible. Given ${\mathcal C}^{X}$ the set of controlled actions, and ${\mathcal C}^{Y}$ the set of controlled opinions, there holds
\begin{equation}\label{eq:control_set_def}
    \begin{cases}
        x_i(t) = +1 \; &\forall i \in {\mathcal C}^{X}\;, \forall t \geq 1,\\
        y_j(t) = +1 \; &\forall j \in {\mathcal C}^{Y}\;, \forall t \geq 1,\\
x_i(0) =  -1 \; &\forall i \in \mathcal V \backslash {\mathcal C}^{X}, \\
y_j(0) =  -1 \; &\forall j \in \mathcal V \backslash {\mathcal C}^{Y}.
    \end{cases}
\end{equation}
For $i\notin \mathcal C^{X}$, if $i\in\mathcal R(t)$, then $x_i(t+1)$ follows \eqref{x-dinamic}; for $i\notin \mathcal C^{Y}$, if $i\in\mathcal R(t)$, then $y_i(t+1)$ follows \eqref{y-dinamic}. 
\end{assumption}

\begin{remark}\label{rem:3scenarios}
We identify three scenarios of particular interest:
\begin{enumerate}
    \item {\bf opinion control}, in which one controls only individuals' opinions ($ {\mathcal C}^{X}=\emptyset$);
    \item {\bf action control}, in which one controls only individuas' actions (${\mathcal C}^{Y}=\emptyset$)
    \item {\bf joint control}, in which one controls both variables (${\mathcal C}^{X}={\mathcal C}^{Y}$), which is the special case considered in our preliminary work~\cite{raineri2024}.
\end{enumerate}
These three scenarios reflect different possible real-world interventions, which act only on opinions, on actions, or on both, capturing technical limitations or constraints which may prevent a policymaker from controlling both layers. 
\end{remark}

\begin{figure}
    \centering
\includegraphics{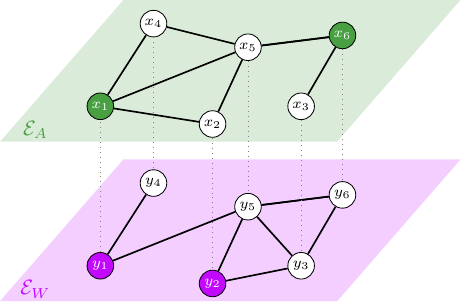}  \caption{Example of a setting that satisfies Assumption~\ref{a:initial_condition}. The top layer represents individuals' action (white for $-1$, green for $+1$), the bottom layer represents individuals' opinion (shades from white for $-1$ to violet for $+1$). Control sets are $\mathcal C^{X}=\{1,6\}$ and $\mathcal C^{Y}=\{1,2\}$.}
    \label{fig:control}
\end{figure}

Hereafter, we will refer to a \emph{controlled coevolutionary dynamics} as a coevolutionary dynamics with utility function in \eqref{eq:utility}, under Assumptions~\ref{a:activation} and~\ref{a:initial_condition}. The goal of the controller, i.e., to lead all the agents to the desired consensus, can be formalized by first defining the objective function 
\begin{equation}\label{eq:objective}
    \phi(\mathcal C^{X},\mathcal C^{Y}):=\mathbb P[\exists\,T<\infty:\vect{x}(t)=\vect{1}, \forall\,t\geq T],
\end{equation}
i.e., the probability (over the probability space generated by the  revision sequence, which might be stochastic, as observed in Remark~\ref{rem:act}) that all individuals definitively switch their action to $+1$ in finite time when the control sets are $(\mathcal C^{X},\mathcal C^{Y})$. The controller's goal is achieved iff $\phi(\mathcal C^{X},\mathcal C^{Y})=1$. Hence, we formalize the following research problem.

\begin{problem}[Effectiveness guarantees]\label{problem}
    Given a network $\mathcal G$, consider a controlled evolutionary dynamics on the network under Assumptions~\ref{a:activation} and~\ref{a:initial_condition} with specified parameters. For given control sets $(\mathcal C^{X},\mathcal C^{Y})$, compute  $\phi(\mathcal C^{X},\mathcal C^{Y})$.
\end{problem}

Solving Problem~\ref{problem} would allow us to determine whether controlling the opinion and/or action of some nodes is sufficient to guarantee convergence to the desired consensus state. At this stage, a second question naturally follows: what is the minimal set of individuals that one should control in order to guarantee that the goal is achieved? Clearly, when controlling an individual, technical limitations may prevent from controlling both actions and opinions, as discussed in Remark~\ref{rem:3scenarios}. Hence, when formulating the problem, we introduce two additional constraints to denote the set of nodes whose action and opinion can be controlled as $\mathcal V^X$ and $\mathcal V^Y$, respectively, allowing us to define the problem as follows. 

\begin{problem}[Minimal control set]\label{problem2}
    Given a network $\mathcal G$, consider a controlled evolutionary dynamics on the network under Assumptions~\ref{a:activation} and~\ref{a:initial_condition} with specified model parameters. Determine the  solution to the following optimization problem
    \begin{equation}\label{eq:minimal}\begin{array}{rl}
      \displaystyle\arg\min\nolimits_{\mathcal C^X\subseteq \mathcal V,\mathcal C^Y\subseteq \mathcal V}&|\mathcal  C^{X}\cup\mathcal  C^{Y}|\\\text{s.t.}&\displaystyle
       \phi(\mathcal C^{X},\mathcal C^{Y})=1,\\&\displaystyle
       \mathcal C^{X}\subseteq \mathcal V^{X},\,\mathcal C^{Y}\subseteq \mathcal V^{Y},\\
       \end{array}
    \end{equation}
    where $\mathcal V^{X}\subseteq \mathcal V$ and $\mathcal V^{Y}\subseteq \mathcal V$ are constraints on the nodes whose action and opinion can be controlled, respectively.
\end{problem}

\begin{remark}
\label{rem:joint control} By setting $\mathcal V^{X}$ and $\mathcal V^{Y}$, one can enforce a specific form for the solution. In fact, by setting $\mathcal V^{X}=\emptyset$ or $\mathcal V^{Y}=\emptyset$, we obtain solutions of Problem~\ref{problem2} with opinion or action control, respectively, i.e., the first two scenarios discussed in Remark~\ref{rem:3scenarios}. On the contrary, if the same constraints are imposed on the two sets ($\mathcal V^{X}=\mathcal V^{Y}$), if a solution to \eqref{eq:minimal} exists, then there is necessarily a solution with joint control ($\mathcal C^X=\mathcal C^Y$), as it will be clear in the next section, after Corollary~\ref{cor:joint}.
\end{remark}


\section{Main properties of the controlled dynamics}\label{sec:results}

In general, the uncontrolled coevolutionary dynamics requires restrictive assumptions to guarantee convergence, such as homogeneous parameters, symmetric layers, and presence of self-loops~\cite{Hassan2023tac} or asynchronous update rules (see Proposition~\ref{prop:super}). For the controlled dynamics, instead, only the mild and general conditions on the activation sequence (Assumption~\ref{a:activation}) and on the stochasticity and irreducibility of the weight matrices (Assumption~\ref{a:initial_condition}) are needed. 

\begin{theorem}
\label{th:convergence}
Consider a controlled coevolutionary dynamics under Assumptions~\ref{a:activation}--\ref{a:initial_condition}. Then, there exists an equilibrium $(\vect{x}^*,\vect{y}^*)$ such that the action vector $\vect x(t)$ converges to $\vect x^*$ in finite time, and the opinion vector $\vect y(t)$ converges to $\vect y^*$ asymptotically. Moreover, both the opinion and action vectors are monotonically nondecreasing functions of time, i.e., $\vect{x}(t+1)\geq\vect{x}(t)$ and $\vect{y}(t+1)\geq\vect{y}(t)$, for all $t\geq 0$.
\end{theorem}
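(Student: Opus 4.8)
The plan is to establish the two claims—finite-time convergence of $\vect{x}(t)$ and asymptotic convergence of $\vect{y}(t)$—by first proving the monotonicity statement, since monotonicity is the engine that drives the whole argument. I would proceed by induction on $t$ to show that $\vect{x}(t+1)\geq\vect{x}(t)$ and $\vect{y}(t+1)\geq\vect{y}(t)$ for all $t\geq 0$.

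**Base case via the controlled initial condition.** The base step compares $t=0$ with $t=1$. At $t=0$ all uncontrolled nodes sit at $-1$ in both layers, while the controlled nodes in $\mathcal C^X$ and $\mathcal C^Y$ are forced to $+1$ for $t\geq 1$. Hence every controlled coordinate can only increase or stay fixed, and every uncontrolled coordinate starts at the minimum value $-1$, so $\vect{z}(1)\geq\vect{z}(0)$ trivially. This initialization at the global minimum is exactly what seeds the monotone behaviour.

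**Inductive step using monotonicity of the best response.** Assume $\vect{z}(t)\geq\vect{z}(t-1)$. I would argue coordinatewise for each activating node $i\in\mathcal R(t)$, using the closed-form update in Proposition~\ref{prop:dynamics}. The key observation is that $\delta_i(\vect{z})$ in Proposition~\ref{prop:dynamics} is a nonnegative-weighted sum of the entries $y_j$ and $x_j$ (the coefficients $2\beta_i(1-\lambda_i)w_{ij}\geq 0$ and $(1-\beta_i)a_{ij}\geq 0$), so $\delta_i$ is nondecreasing in $\vect{z}$. Thus if $\vect{z}(t)\geq\vect{z}(t-1)$ then $\delta_i(\vect{z}(t))\geq\delta_i(\vect{z}(t-1))$, which forces $s(\vect{z}(t))\geq s(\vect{z}(t-1))$ for the action update \eqref{x-dinamic}; here one must handle the tie-breaking case $\delta_i=0$ carefully, but the convention $x_i(t+1)=x_i(t)$ together with the inductive hypothesis $x_i(t)\geq x_i(t-1)$ closes it. The opinion update \eqref{y-dinamic} is likewise a nonnegative combination of the $y_j(t)$ and of $s(\vect{z}(t))$, both of which are nondecreasing, yielding $y_i(t+1)\geq y_i(t)$. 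Nodes with $i\notin\mathcal R(t)$ simply hold their previous value, so monotonicity is preserved. This supermodular/monotone-update structure (Proposition~\ref{prop:super}) is what makes the induction go through.

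**From monotonicity to convergence.** Once monotonicity holds, convergence is almost immediate. The action vector $\vect{x}(t)$ is nondecreasing and confined to the finite lattice $\{-1,+1\}^n$, so it is eventually constant; using Assumption~\ref{a:activation}, every node activates within any window of length $T$, so after all coordinates have finished increasing, $\vect{x}(t)$ reaches a fixed $\vect{x}^*$ in finite time. For the opinions, $\vect{y}(t)$ is nondecreasing and bounded above by $+\vect{1}$, hence converges by the monotone convergence theorem to some $\vect{y}^*$; this is only asymptotic because \eqref{y-dinamic} is a contraction-type averaging step rather than a discrete jump. Finally I would verify that the limit $(\vect{x}^*,\vect{y}^*)$ is an equilibrium: once $\vect{x}(t)\equiv\vect{x}^*$, the opinion recursion \eqref{y-dinamic} becomes a linear time-invariant averaging iteration whose fixed point $\vect{y}^*$ satisfies the equilibrium equations, and the irreducibility of $\mat{W}$ in Assumption~\ref{a:initial_condition} guarantees the limit is consistent with $\vect{x}^*$. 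The main obstacle I anticipate is making the asynchronous-activation bookkeeping rigorous—handling coordinates that do not update at a given $t$ and the $\delta_i=0$ tie-breaking simultaneously—so that the coordinatewise induction is airtight under the general revision sequence allowed by Assumption~\ref{a:activation}.
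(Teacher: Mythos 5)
Your proof is correct, but it follows a different route from the paper's. The paper's argument is abstract and short: it invokes Proposition~\ref{prop:super} (supermodularity), observes that the tie-breaking convention makes \eqref{eq:br} a \emph{minimal} best response, and then cites the standard result that a minimal-best-response trajectory started from the least element of the lattice (which is exactly what Assumption~\ref{a:initial_condition} enforces for the uncontrolled nodes) is monotonically nondecreasing; convergence then follows from the monotone convergence theorem. Your proof instead runs the coordinatewise induction directly on the explicit dynamics of Proposition~\ref{prop:dynamics}, using the nonnegativity of the coefficients of $\delta_i$ — this is precisely the alternative the paper mentions in passing (the arguments of Lemmas~1--2 of the preliminary conference version), so it is a legitimate, more elementary and self-contained derivation that does not rely on importing the supermodular-games machinery. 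The price is the bookkeeping you already flag: under asynchronous activation the correct comparison is between $x_i(t+1)=s_i(\vect z(t))$ and $x_i(t)=s_i(\vect z(t'))$ where $t'<t$ is the \emph{last activation time} of $i$ (or the initial value $-1$ if $i$ has never activated), not $s_i(\vect z(t-1))$; this is closed by chaining the inductive hypothesis $\vect z(\tau)\geq\vect z(\tau-1)$ over all $\tau\leq t$, and the same chaining handles the $\delta_i=0$ tie-breaking and the opinion update. Your final step (finite lattice for $\vect x$, monotone convergence theorem for $\vect y$, and verification that the limit is a fixed point) matches the paper's conclusion, and in fact goes slightly further, since the paper defers the explicit equilibrium characterization to Lemma~\ref{lem:equilibrium} in the proof of Theorem~\ref{th:algorithm}.
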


 Since the proof, reported in Appendix~\ref{sec:proof_convergence}, is based on Proposition~\ref{prop:super}, the irreducibility of $\mat W$ and $\mat A$ (equivalently the strong connectivity of each layer of $\mathcal G$) are not strictly required for convergence. However, some of our later results rely on the irreducibility property. If a network is not strongly connected, one can partition it into strongly connected components and control each separately. 

\begin{remark}
Supermodularity of the game implies that all trajectories of the controlled dynamics are lower-bounded by the trajectory with initial condition that satisfies Assumption~\ref{a:initial_condition}. Hence, $\phi(\mathcal C^X,\mathcal C^Y)=1$ is a sufficient condition to reach the desired consensus from any initial condition, not just from the one that satisfies Assumption~\ref{a:initial_condition}, for which the condition is also necessary. Consequently, a solution of Problem~\ref{problem2}, which is optimal (in the sense of the solution being a minimal control set) under Assumption~\ref{a:initial_condition}, will also yield a feasible solution for any initial condition, but optimality  might be lost.
\end{remark}

Theorem~\ref{th:convergence} guarantees that under the  hypotheses of Assumption \ref{a:initial_condition} the controlled coevolutionary dynamics converges and that actions converge in finite time. Moreover, it also guarantees monotonicity of the trajectory of the state vector $\vect z(t)$. As a consequence, if $i\notin\mathcal C^{X}$ switches to action $+1$ at a certain time, then $i$ will never flip back. This observation will be fundamental to development of a systematic approach to addressing our research problems, as presented in Sections~\ref{sec:problem1} and \ref{sec:problem2}. Finally, it is worth noticing that the hypothesis in Assumption~\ref{a:initial_condition} on the uncontrolled node dynamics are key for obtaining monotonicity (which then yields convergence). In fact, from a general initial condition, one may observe non-monotone trajectories (see, e.g.,~\cite{Hassan2023tac}). 
Besides monotonicity, we bring attention to the following property of the controlled dynamics, proved in Appendix~\ref{sec:proof_intermediate}.
\begin{lemma} \label{lemma:intermediate  C}
    Let us consider control sets $(\mathcal C^{X},\mathcal C^{Y})$. If $\phi(\mathcal C^{X},\mathcal C^{Y})=1$, then all control sets $(\mathcal {\bar C}^{\mathcal X},\mathcal {\bar C}^{\mathcal Y})$ such that $\mathcal{C}^{\mathcal X}\subseteq \mathcal {\bar C}^{\mathcal X}$ and $\mathcal {C}^{\mathcal Y}\subseteq \mathcal {\bar C}^{\mathcal Y}$ satisfy $\phi(\mathcal {\bar C}^{\mathcal X},\mathcal {\bar C}^{\mathcal Y}) =1$. If $\phi(\mathcal C^{X},\mathcal C^{Y})=0$, then all control sets $(\mathcal {\bar C}^{\mathcal X},\mathcal {\bar C}^{\mathcal Y})$ such that $\mathcal{\bar C}^{\mathcal X}\subseteq \mathcal {C}^{\mathcal X}$ and $\mathcal {\bar C}^{\mathcal Y}\subseteq \mathcal {C}^{\mathcal Y}$ satisfy $\phi(\mathcal {\bar C}^{\mathcal X},\mathcal {\bar C}^{\mathcal Y}) =0$.
\end{lemma}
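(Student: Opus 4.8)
The plan is to prove the stronger statement that $\phi$ is monotone nondecreasing in the control sets with respect to inclusion, i.e., $\phi(\mathcal C^X,\mathcal C^Y)\le \phi(\bar{\mathcal C}^X,\bar{\mathcal C}^Y)$ whenever $\mathcal C^X\subseteq \bar{\mathcal C}^X$ and $\mathcal C^Y\subseteq \bar{\mathcal C}^Y$; both claims of the lemma then follow immediately by specializing to the cases $\phi=1$ and $\phi=0$. To obtain this monotonicity I would use a coupling argument: run the two controlled dynamics, one with control sets $(\mathcal C^X,\mathcal C^Y)$ and one with $(\bar{\mathcal C}^X,\bar{\mathcal C}^Y)$, on the \emph{same} realization of the (possibly random) revision sequence $\{\mathcal R(t)\}$, and denote their states by $\vect z(t)$ and $\bar{\vect z}(t)$, respectively.

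The heart of the argument is the pathwise domination $\bar{\vect z}(t)\ge \vect z(t)$ for every $t\ge 0$ and every realization, shown by induction on $t$. The base case is immediate, since both trajectories start from $\vect z(0)=\bar{\vect z}(0)=-\vect 1$ under Assumption~\ref{a:initial_condition}. For the inductive step, fix a node $i$ and consider its action update (the opinion update is analogous). If $i\in\bar{\mathcal C}^X$, then $\bar x_i(t+1)=+1$ dominates trivially; if $i\notin\mathcal R(t)$, then the coordinate is frozen and the inductive hypothesis carries over; otherwise $i\notin\bar{\mathcal C}^X\supseteq\mathcal C^X$ activates and both trajectories apply the best-response map of Proposition~\ref{prop:dynamics}. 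The key observation, which is precisely the supermodularity established in Proposition~\ref{prop:super}, is that $\delta_i$ is nondecreasing in the neighbors' states because $a_{ij},w_{ij}\ge 0$; hence $\bar{\vect z}(t)\ge \vect z(t)$ yields $\delta_i(\bar{\vect z}(t))\ge \delta_i(\vect z(t))$, and a short case check on the sign of $\delta_i$ (including the tie case $\delta_i=0$, where the map returns the previous action and the inductive hypothesis $\bar x_i(t)\ge x_i(t)$ is used) shows $\bar x_i(t+1)\ge x_i(t+1)$. The opinion update is monotone by the same reasoning together with the nonnegativity of the convex-combination weights in \eqref{y-dinamic}.

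With pathwise domination in hand, I would conclude as follows. On any realization for which the smaller configuration reaches the target, i.e. $\vect x(t)=\vect 1$ for all $t\ge T$, domination together with the a priori bound $\bar{\vect x}(t)\le \vect 1$ forces $\bar{\vect x}(t)=\vect 1$ for all $t\ge T$ as well; thus the event in the definition \eqref{eq:objective} for the smaller configuration is contained in the corresponding event for the larger one. Taking probabilities over the shared probability space gives $\phi(\mathcal C^X,\mathcal C^Y)\le \phi(\bar{\mathcal C}^X,\bar{\mathcal C}^Y)$, from which the lemma is immediate: if the smaller set already guarantees $\phi=1$ then so does any superset, and if a set yields $\phi=0$ then so does any subset.

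The step I expect to require the most care is the monotonicity of the discrete action update across the two trajectories, specifically the tie-breaking regime $\delta_i=0$: there the map depends not only on $\delta_i$ but also on the current action, so the argument must invoke the inductive hypothesis rather than supermodularity directly. Once that case is dispatched, the coupling and the probability comparison are routine. I would also note that neither irreducibility nor Assumption~\ref{a:activation} is needed for the domination itself, as they enter only through the interpretation of $\phi$ furnished by Theorem~\ref{th:convergence}.
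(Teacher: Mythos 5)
Your proposal is correct and follows essentially the same route as the paper's proof: couple the two controlled dynamics on a common realization of the revision sequence, use the monotonicity of the best-response update (via the nondecreasing dependence of $\delta_i$ and of the opinion update on $\vect z$) to establish the pathwise domination $\bar{\vect z}(t)\geq \vect z(t)$, and then compare the convergence events to deduce the ordering of $\phi$. Your explicit treatment of the tie-breaking case $\delta_i=0$ and the unified derivation of both claims from the single inequality $\phi(\mathcal C^{X},\mathcal C^{Y})\leq\phi(\bar{\mathcal C}^{X},\bar{\mathcal C}^{Y})$ are slightly more detailed than the paper's presentation, but the argument is the same.
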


An immediate consequence of Lemma~\ref{lemma:intermediate  C} is the following, whose proof is reported in Appendix~\ref{sec:proof_joint}. 
\begin{corollary}\label{cor:joint}If Problem~\ref{problem2} admits a solution, then there is always a solution such that, letting $\mathcal C:=\mathcal C^X\cup \mathcal C^Y$, then  $\mathcal C^X=C\cap\mathcal V^X$ and $\mathcal C^Y=C\cap\mathcal V^Y$.
\end{corollary}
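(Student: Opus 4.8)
The plan is to take an arbitrary optimal solution of Problem~\ref{problem2} and show that \emph{saturating} the control on each selected node---that is, controlling every admissible component (action and/or opinion) of each node that already appears in the solution---neither enlarges the objective value nor destroys feasibility, so that the saturated pair is again optimal and has exactly the asserted form.

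First I would fix any optimal solution $(\mathcal C^X,\mathcal C^Y)$ of \eqref{eq:minimal}, which exists by hypothesis, and set $\mathcal C:=\mathcal C^X\cup\mathcal C^Y$. I then introduce the candidate pair $\bar{\mathcal C}^X:=\mathcal C\cap\mathcal V^X$ and $\bar{\mathcal C}^Y:=\mathcal C\cap\mathcal V^Y$, which is precisely the form claimed in the statement, and the whole task reduces to showing that this candidate is feasible and optimal.

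Next I would verify feasibility. The inclusions $\bar{\mathcal C}^X\subseteq\mathcal V^X$ and $\bar{\mathcal C}^Y\subseteq\mathcal V^Y$ hold by construction, so the admissibility constraints are met. For the effectiveness constraint, observe that since $\mathcal C^X\subseteq\mathcal V^X$ and $\mathcal C^X\subseteq\mathcal C$ we have $\mathcal C^X\subseteq\mathcal C\cap\mathcal V^X=\bar{\mathcal C}^X$, and symmetrically $\mathcal C^Y\subseteq\bar{\mathcal C}^Y$. The candidate pair therefore enlarges \emph{both} control sets, and the monotonicity of $\phi$ in Lemma~\ref{lemma:intermediate  C} (its ``upward'' implication) carries $\phi(\mathcal C^X,\mathcal C^Y)=1$ over to $\phi(\bar{\mathcal C}^X,\bar{\mathcal C}^Y)=1$. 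Then I would establish optimality by checking that the objective is unchanged via the set-theoretic identity
\[
\bar{\mathcal C}^X\cup\bar{\mathcal C}^Y=(\mathcal C\cap\mathcal V^X)\cup(\mathcal C\cap\mathcal V^Y)=\mathcal C\cap(\mathcal V^X\cup\mathcal V^Y).
\]
Since $\mathcal C^X\subseteq\mathcal V^X$ and $\mathcal C^Y\subseteq\mathcal V^Y$, we have $\mathcal C=\mathcal C^X\cup\mathcal C^Y\subseteq\mathcal V^X\cup\mathcal V^Y$, so the right-hand side collapses to $\mathcal C$, giving $\bar{\mathcal C}^X\cup\bar{\mathcal C}^Y=\mathcal C=\mathcal C^X\cup\mathcal C^Y$ and hence $|\bar{\mathcal C}^X\cup\bar{\mathcal C}^Y|=|\mathcal C^X\cup\mathcal C^Y|$. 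A feasible pair with the same objective value as an optimal pair is itself optimal, which proves the claim.

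I do not expect a substantial obstacle: the argument rests entirely on Lemma~\ref{lemma:intermediate  C} together with the elementary containment $\mathcal C\subseteq\mathcal V^X\cup\mathcal V^Y$. The only point deserving a moment's care is ensuring that saturation does not inflate the objective---this is exactly why one intersects $\mathcal C$ with $\mathcal V^X$ and $\mathcal V^Y$ rather than assigning $\mathcal C$ to both sets---and that the resulting enlargement is genuinely in the direction covered by the monotonicity lemma. As a byproduct, specializing to $\mathcal V^X=\mathcal V^Y$ yields $\bar{\mathcal C}^X=\bar{\mathcal C}^Y$, recovering the joint-control statement announced in Remark~\ref{rem:joint control}.
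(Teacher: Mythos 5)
Your proposal is correct and follows essentially the same route as the paper's proof: define $\mathcal C=\mathcal C^X\cup\mathcal C^Y$ from an optimal solution, form the saturated pair $(\mathcal C\cap\mathcal V^X,\mathcal C\cap\mathcal V^Y)$, invoke the upward monotonicity of Lemma~\ref{lemma:intermediate  C} for feasibility, and observe that the union (hence the objective value) is unchanged. Your explicit verification that $\mathcal C\subseteq\mathcal V^X\cup\mathcal V^Y$ collapses the union back to $\mathcal C$ is a slightly more detailed spelling-out of a step the paper states without justification, but the argument is the same.
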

  

In plain words, Corollary~\ref{cor:joint} states that, if it is possible to control both the action and the opinion of a given individual, then it is always optimal to either control both variables or to control none. As a straightforward consequence of Lemma~\ref{lemma:intermediate  C} and Corollary~\ref{cor:joint}, Problem~\ref{problem2} can be reduced to an optimization problem over a single control set, as stated in the following.
\begin{corollary}\label{rem:joint}
  Let
    \begin{equation}\label{eq:minimal2}\begin{array}{rl}
       \mathcal C^*=\displaystyle\arg\min\nolimits_{\mathcal C\subseteq \mathcal V}&|\mathcal  C|\\\text{s.t.}&\displaystyle
       \phi(\mathcal C\cap\mathcal V^{X},\mathcal C\cap\mathcal V^{Y})=1.
       \end{array}
    \end{equation}
    Then, an optimal solution of Problem~\ref{problem2} is given by $C^X=\mathcal C^*\cap\mathcal V^{X}$ and $C^Y=\mathcal C^*\cap\mathcal V^{Y}$.
\end{corollary}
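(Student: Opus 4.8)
The plan is to prove the claim by a two-sided (sandwich) comparison of the optimal values of Problem~\ref{problem2} and of the single-set program \eqref{eq:minimal2}, showing the two values coincide and that the proposed construction attains the common optimum. Throughout I assume Problem~\ref{problem2} is feasible (otherwise \eqref{eq:minimal2} is infeasible as well and there is nothing to prove); denote by $p_2$ the optimal value $|\mathcal{C}^X\cup\mathcal{C}^Y|$ of Problem~\ref{problem2} and by $p^*=|\mathcal{C}^*|$ the optimal value of \eqref{eq:minimal2}.

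First I would verify that the candidate pair $(\mathcal{C}^X,\mathcal{C}^Y):=(\mathcal{C}^*\cap\mathcal{V}^X,\mathcal{C}^*\cap\mathcal{V}^Y)$ is feasible for Problem~\ref{problem2}: the inclusions $\mathcal{C}^X\subseteq\mathcal{V}^X$ and $\mathcal{C}^Y\subseteq\mathcal{V}^Y$ hold by construction, and the effectiveness constraint $\phi(\mathcal{C}^X,\mathcal{C}^Y)=\phi(\mathcal{C}^*\cap\mathcal{V}^X,\mathcal{C}^*\cap\mathcal{V}^Y)=1$ is exactly the constraint enforced in \eqref{eq:minimal2}. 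Using the distributive identity $(\mathcal{C}^*\cap\mathcal{V}^X)\cup(\mathcal{C}^*\cap\mathcal{V}^Y)=\mathcal{C}^*\cap(\mathcal{V}^X\cup\mathcal{V}^Y)$, its objective value satisfies $|\mathcal{C}^X\cup\mathcal{C}^Y|=|\mathcal{C}^*\cap(\mathcal{V}^X\cup\mathcal{V}^Y)|\leq|\mathcal{C}^*|=p^*$. Combined with feasibility this gives $p_2\leq|\mathcal{C}^X\cup\mathcal{C}^Y|\leq p^*$.

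Next I would establish the reverse inequality $p^*\leq p_2$, and here is where Corollary~\ref{cor:joint} does the decisive work. By that corollary there exists an optimal solution $(\mathcal{C}^X_\circ,\mathcal{C}^Y_\circ)$ of Problem~\ref{problem2}—so $|\mathcal{C}^X_\circ\cup\mathcal{C}^Y_\circ|=p_2$—that is \emph{aligned}, i.e. writing $\mathcal{C}_\circ:=\mathcal{C}^X_\circ\cup\mathcal{C}^Y_\circ$ one has $\mathcal{C}^X_\circ=\mathcal{C}_\circ\cap\mathcal{V}^X$ and $\mathcal{C}^Y_\circ=\mathcal{C}_\circ\cap\mathcal{V}^Y$. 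This single set $\mathcal{C}_\circ\subseteq\mathcal{V}$ is feasible for \eqref{eq:minimal2}, since $\phi(\mathcal{C}_\circ\cap\mathcal{V}^X,\mathcal{C}_\circ\cap\mathcal{V}^Y)=\phi(\mathcal{C}^X_\circ,\mathcal{C}^Y_\circ)=1$, and it has cost $|\mathcal{C}_\circ|=p_2$. As $\mathcal{C}^*$ is a minimizer of \eqref{eq:minimal2}, we conclude $p^*\leq|\mathcal{C}_\circ|=p_2$.

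Chaining the two bounds yields $|\mathcal{C}^X\cup\mathcal{C}^Y|\leq p^*\leq p_2\leq|\mathcal{C}^X\cup\mathcal{C}^Y|$, so all quantities are equal; the constructed pair is feasible and attains $p_2$, hence it is optimal for Problem~\ref{problem2}, which is the assertion. I expect the only delicate point to be the reverse inequality: without the alignment guarantee of Corollary~\ref{cor:joint} (itself a consequence of the monotonicity encoded in Lemma~\ref{lemma:intermediate  C}), an arbitrary optimal pair of Problem~\ref{problem2} need not arise as $(\mathcal{C}\cap\mathcal{V}^X,\mathcal{C}\cap\mathcal{V}^Y)$ for a single $\mathcal{C}$, and the two programs could a priori differ in value. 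The set-algebra identity and the feasibility check are routine, so the whole argument reduces to this short value comparison once Corollary~\ref{cor:joint} is in hand.
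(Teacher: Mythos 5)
Your proof is correct and follows exactly the route the paper intends: the paper states this corollary without a separate proof, calling it a ``straightforward consequence'' of Lemma~\ref{lemma:intermediate  C} and Corollary~\ref{cor:joint}, and your two-sided value comparison (feasibility of the candidate pair giving $p_2\leq p^*$, and the aligned optimal solution from Corollary~\ref{cor:joint} giving $p^*\leq p_2$) is precisely the argument being elided. The only cosmetic remark is that one could note $\mathcal C^*\subseteq\mathcal V^X\cup\mathcal V^Y$ without loss of generality (otherwise intersecting $\mathcal C^*$ with $\mathcal V^X\cup\mathcal V^Y$ would contradict its minimality), turning your inequality $|\mathcal C^*\cap(\mathcal V^X\cup\mathcal V^Y)|\leq|\mathcal C^*|$ into an equality, but this is not needed for the conclusion.
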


It is worth noticing that, despite the useful properties of the controlled coevolutionary dynamics demonstrated in the above, the problem of controlling the dynamics and determining the  minimal control sets is inherently complex, as stated in the following result, with the proof in Appendix~\ref{sec:proof_np}. 
\begin{theorem}\label{th:np}
Problem \ref{problem2} is NP-complete.
\end{theorem}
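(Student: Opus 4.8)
The plan is to prove NP-completeness in two parts: membership in NP and NP-hardness via reduction from a known NP-complete problem. For membership, I would first recast Problem~\ref{problem2} in its decision form: given a network, model parameters, constraint sets $\mathcal V^X,\mathcal V^Y$, and an integer $k$, does there exist a control set with $|\mathcal C^X\cup\mathcal C^Y|\leq k$ satisfying $\phi=1$? By Corollary~\ref{rem:joint}, a certificate can be taken as a single set $\mathcal C\subseteq\mathcal V$ with $|\mathcal C|\leq k$. The verification that $\phi(\mathcal C\cap\mathcal V^X,\mathcal C\cap\mathcal V^Y)=1$ must be shown to run in polynomial time; this is where Theorem~\ref{th:convergence} (finite-time action convergence and monotonicity) is essential, since it lets me reduce the check to running the \emph{synchronous} dynamics (or a worst-case activation) for polynomially many steps and confirming $\vect x=\vect 1$. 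I expect to invoke the algorithm from Section~\ref{sec:problem1} (whose effectiveness is guaranteed by Theorem~\ref{th:algorithm}) to certify that $\phi$ can be evaluated deterministically in polynomial time, establishing membership in NP.

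The core of the proof is NP-hardness. I would choose a natural source problem whose structure matches a committed-minority spreading process; the most promising candidates are the minimum \emph{target set selection} problem or minimum \emph{vertex cover}/\emph{dominating set}, all classically NP-complete. The plan is to construct, from an arbitrary instance of the source problem, a network $\mathcal G=(\mathcal V,\mathcal E_A,\mat A,\mathcal E_W,\mat W)$ together with parameters $\lambda_i,\beta_i$ and constraint sets, such that a control set of size $k$ achieving $\phi=1$ in the coevolutionary dynamics exists if and only if the source instance has a solution of size $k$. The key modeling device is the thresholding nature of the action update in Proposition~\ref{prop:dynamics}: an uncontrolled node $i$ switches to $+1$ precisely when $\delta_i(\vect z(t))>0$, i.e. when a weighted combination of its neighbors' (already-switched) actions and opinions crosses a threshold determined by $\mat A$, $\mat W$, and $\beta_i$. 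By choosing edge weights and the parameter $\beta_i$ appropriately, I can engineer each node's activation threshold to mimic the adoption rule of the source problem (for instance, a node becomes active once a prescribed number or fraction of its neighbors are active), thereby encoding the combinatorial constraint exactly.

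The hard part will be the two-directional correctness of the reduction under the full coevolutionary dynamics rather than a pure action-cascade: the opinion layer evolves on a \emph{continuous} scale via \eqref{y-dinamic} and feeds back into $\delta_i$, so I must ensure the opinion contribution does not spuriously trigger or block action switches in a way that breaks the equivalence. To control this I would design the gadget so that opinions of switched nodes saturate toward $+1$ fast enough (or weight the opinion term so that the action term dominates the sign of $\delta_i$) that the action cascade faithfully tracks the target-set propagation; the monotonicity guaranteed by Theorem~\ref{th:convergence} is what makes this tractable, since no node ever reverts and the final action configuration is determined solely by which nodes eventually cross threshold. I would then argue: if the source instance is a YES-instance with solution $S$, placing the control set at $S$ forces a full cascade to $\vect x=\vect 1$, giving $\phi=1$; conversely, any control set of size $k$ with $\phi=1$ must, by the threshold correspondence, constitute a valid solution of the source instance. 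Finally, I would verify the reduction is polynomial-time (the constructed network and rational parameters have size polynomial in the input), which together with membership in NP completes the proof.
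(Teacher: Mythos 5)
Your NP-membership argument is essentially the paper's: the certificate is a single set $\mathcal C$ (via Corollary~\ref{rem:joint}), and feasibility is verified in polynomial time by Algorithm~\ref{alg} through Theorem~\ref{th:algorithm}, which gives the $O(n^3)$ check of $\phi=1$. Where you diverge is the hardness direction. You plan an explicit gadget reduction from target set selection (or vertex cover), engineering edge weights and $\beta_i$ so that each node's threshold $\delta_i>0$ encodes the source problem's adoption rule, and you correctly flag the central obstacle: the continuous opinion layer feeds back into $\delta_i$ and could spuriously trigger or block switches, so the gadgets must neutralize it. The paper avoids all of this with a one-line specialization: since $\lambda_i\in(0,1]$, the choice $\lambda_i=1$ is admissible, and then $\delta_i=(1-\beta_i)\sum_j a_{ij}x_j$ --- the opinion term vanishes identically and the model collapses to a network majority game, whose minimal targeting problem is already known to be NP-hard \cite{Como2022supermodular}. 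In other words, the ``hard part'' you identify has a trivial resolution inside the model's own parameter range, which turns your multi-page reduction into an observation. Your route is viable in principle and would yield a self-contained hardness proof not relying on the cited result, but as written it is a plan rather than a proof: the gadget construction, the choice of weights respecting stochasticity of $\mat A$ and $\mat W$, and the two-directional correctness under the coupled dynamics are all left open. If you pursue it, notice first that you may set $\lambda_i=1$ everywhere; the opinion layer then becomes irrelevant to the action cascade and your reduction reduces to the standard one for majority-dynamics target set selection.
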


Moreover, we can show that the objective function in \eqref{eq:objective} is not submodular, hindering the possibility to easily derive sub-optimal solutions via greedy algorithms~\cite{Nemhauser1978}, as is done for related control problems on social networks~\cite{kempe2003_maxspread}. The proof is reported in  Appendix~\ref{sec:proof_sub}.

\begin{proposition}\label{prop:sub}
The function  $\phi(\mathcal C^{X},\mathcal C^{Y})$ in \eqref{eq:objective} is not submodular with respect to any of its two variables.
\end{proposition}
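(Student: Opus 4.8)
The plan is to prove that the objective function $\phi(\mathcal C^X,\mathcal C^Y)$ is not submodular by exhibiting an explicit counterexample, since submodularity is a universal property and a single instance violating it suffices. Recall that $\phi$ takes values in $\{0,1\}$ (it equals $1$ iff the control set guarantees convergence to the all-$+1$ consensus, and is monotone nondecreasing by Lemma~\ref{lemma:intermediate C}). Submodularity with respect to, say, the action-control variable would require that for all sets $S\subseteq T$ and any element $v\notin T$,
\begin{equation*}
\phi(S\cup\{v\},\mathcal C^Y)-\phi(S,\mathcal C^Y)\geq \phi(T\cup\{v\},\mathcal C^Y)-\phi(T,\mathcal C^Y).
\end{equation*}
Since $\phi$ is $\{0,1\}$-valued and monotone, the only way to violate this is to find a configuration in which adding $v$ to the smaller set $S$ produces \emph{no} gain (the marginal is $0$), while adding $v$ to the larger set $T$ flips $\phi$ from $0$ to $1$ (the marginal is $+1$). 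In other words, I would display the defining diminishing-returns inequality and then reduce the whole proof to constructing a network where a node only becomes ``useful'' as a controller after enough of its neighbors have already been recruited—an intuitive \emph{threshold}/critical-mass effect that directly contradicts diminishing returns.

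The concrete construction I would build is a small gadget in which some target node $u$ requires the simultaneous support of several neighbors before its best-response in \eqref{x-dinamic} tips it to $+1$; this is exactly the behavior permitted by the sign threshold in $\delta_i(\vect z(t))$, where the action flips only once the weighted sum of neighboring actions and opinions crosses zero. First I would choose the network (e.g.\ a node $u$ with three neighbors $v_1,v_2,v_3$, with weights on $\mat A$, $\mat W$ and parameters $\lambda_i,\beta_i$ tuned so that $u$ flips to $+1$ only when at least two of its neighbors are at $+1$). Then, taking $S=\{v_1\}$ and $T=\{v_1,v_2\}$ with $v=v_3$ replaced by an appropriate controllable node, I would verify using the closed-form dynamics of Proposition~\ref{prop:dynamics} together with the finite-time convergence guarantee of Theorem~\ref{th:convergence} that: controlling $S\cup\{v_2\}$ alone does not reach consensus (so the marginal over $S$ is zero), yet controlling $T$ plus the extra node does reach consensus (so the marginal over $T$ is one). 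Evaluating $\phi$ at each of these four control sets then yields the strict reversal of the inequality above. The analogous construction, symmetric in the roles of $\mat A$ and $\mat W$, handles nonsubmodularity with respect to $\mathcal C^Y$.

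The main obstacle, and where I would spend the most care, is certifying the two convergence/non-convergence claims \emph{rigorously} rather than by simulation: I must exhibit explicit parameter values $(\lambda_i,\beta_i)$, explicit stochastic matrices $\mat A,\mat W$, and an explicit activation sequence (or argue over all admissible sequences satisfying Assumption~\ref{a:activation}), and then trace the monotone trajectory to its equilibrium. Monotonicity from Theorem~\ref{th:convergence} is the key lever here: because actions never flip back, it suffices to show that in the under-controlled case the dynamics reaches an equilibrium in which some node is permanently stuck at $-1$—i.e.\ its $\delta_i$ is strictly negative at a configuration that is itself a fixed point—whereas in the well-controlled case every node's threshold is eventually crossed. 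Pinning down a single clean gadget for which both computations are transparent, and verifying the fixed-point/threshold arithmetic for the chosen weights, is the crux; once that gadget is fixed, substituting the four control sets into \eqref{eq:objective} and reading off the $0/1$ pattern is routine.
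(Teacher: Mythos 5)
Your strategy is exactly the one the paper uses: non-submodularity is disproved by a single explicit counterexample built on a threshold (critical-mass) effect, and your reduction of the problem to finding sets where the marginal gain of a node is $0$ over the smaller set but $+1$ over the larger set is the correct reformulation of the diminishing-returns inequality for a monotone $\{0,1\}$-valued function. However, there is a genuine gap: a proof of non-submodularity \emph{is} the counterexample, and your proposal never actually produces one. You describe a four-node gadget ``to be tuned,'' explicitly defer the choice of $\mat A$, $\mat W$, $\lambda_i$, $\beta_i$, and acknowledge that certifying the convergence/non-convergence claims is ``the crux'' you have not resolved. As written, nothing is verified, so the statement is not yet proved.

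The paper's construction shows how much simpler this can be made, and addresses precisely the difficulty you flag. It uses only \emph{two} nodes with self-loops ($a_{12}=w_{12}=a_{21}=w_{21}=1/3$, $a_{11}=w_{11}=a_{22}=w_{22}=2/3$), synchronous activation, and---crucially---sets $\mathcal C^{Y}=\mathcal V$, so that all opinions are pinned at $+1$ and never evolve. With opinions frozen, the only dynamics left is whether the single uncontrolled node's action flips, and by Proposition~\ref{prop:dynamics} this reduces to checking the sign of a one-line expression in that node's own $(\lambda_i,\beta_i)$; there is no fixed-point tracing or trajectory analysis at all. Choosing $\beta_i=1/2$, $\lambda_i=2/3$ makes $\phi(\{1\},\mathcal V)=\phi(\{2\},\mathcal V)=0$ while trivially $\phi(\{1\}\cup\{2\},\mathcal V)=\phi(\mathcal V,\mathcal V)=1$, violating $\phi(\mathcal S,\cdot)+\phi(\mathcal T,\cdot)\geq\phi(\mathcal S\cup\mathcal T,\cdot)+\phi(\mathcal S\cap\mathcal T,\cdot)$; a symmetric construction handles the second variable. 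If you complete your proposal, consider adopting this device of fully controlling the \emph{other} layer: it collapses the hard part of your plan (rigorously certifying non-convergence of a coupled monotone trajectory) into a single sign check, and lets the counterexample be verified by inspection.
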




\section{Effectiveness Guarantees Problem} \label{sec:problem1}
 The results from the previous section call for the development of algorithms able to solve our research problem in an efficient way. We start from Problem \ref{problem}. In our preliminary work \cite{raineri2024}, we proposed an algorithm and conjectured that this algorithm can solve Problem~\ref{problem} in the simplified setting of joint control, $\mathcal C^{X}=\mathcal C^{Y}$ (see Remark~\ref{rem:3scenarios}). Here, we propose a refined version of the algorithm that accounts for the more general setting in Assumption \ref{a:initial_condition}, while also  reducing the total computational complexity. Then, we rigorously prove the conjecture in the general scenario, demonstrating that the proposed algorithm solves Problem~\ref{problem} in polynomial time. Our procedure is based on the following iterative scheme.

We consider the general case in which we control the action for nodes in $\mathcal{C}^{X}$ and the opinion for nodes in $\mathcal{C}^{Y}$ as described in Assumption \ref{a:initial_condition}. At iteration $k=1$, we initialize the algorithm by defining $\mathcal A(1)= \mathcal{C}^{X}$, which involve only the nodes whose action is controlled. At each step of the algorithm $k$, we construct a candidate equilibrium with action vector $\vect{\hat x}$ with
\begin{equation}\label{eq:x-candidate}
    \hat x_i= \left\{\begin{array}{ll}
        +1&\text{if }i\in\mathcal A(k),\\
        -1&\text{if }i\notin\mathcal A(k),
        \end{array}\right.
\end{equation}
and opinion vector $\vect{\hat y}$, computed by solving the linear system
\begin{equation}
\label{y-equilibrium}
\hat y_i=\left\{\begin{array}{ll}(1-\lambda_i)\sum_{j \in \mathcal V} w_{ij}\hat y_j + \lambda_i  \hat x_i&\text{if }i\notin \mathcal{C}^{Y},\\
+1&\text{if }i\in \mathcal{C}^{Y},
\end{array}\right.
\end{equation} 
which has a unique solution, as we will prove later. Observe that solving \eqref{y-equilibrium} requires inverting a matrix that is independent of the variables. This operation can be optimized by computing it in advance, before running the iterations.

Then, we will demonstrate in  Theorem~\ref{th:algorithm} below the following properties. First, we show that given an action vector $\vect{\hat x}$, there exists a unique $\vect{\hat y}$ such that $\vect{\hat z}=(\vect{\hat x},\vect{\hat y})$ is a candidate equilibrium of the controlled coevolutionary dynamics. 
To check whether $\vect{\hat z}$ is an actual equilibrium, we check if any individual $i$ who plays action $-1$ at $\vect{\hat z}$ would switch to $+1$. According to Proposition~\ref{prop:dynamics}, this can be checked by computing the sign of $\delta_i(\vect{\hat z})$ for all $i\notin\mathcal A(k)$. If all $\delta_i(\vect{\hat z})\leq 0$, then no individual will switch action, and the candidate $\vect{\hat z}$ is indeed the equilibrium reached by the system. 
Otherwise, we will prove that all individuals with $\delta_i(\vect{\hat z})>0$ will eventually switch to $+1$. Hence, $\vect{\hat z}$ is not an equilibrium of the controlled coevolutionary dynamics, and we need to consider other potential equilibria where also those individuals with $\delta_i(\vect{\hat z})>0$ switch to action $+1$. To this aim, we increase the iteration index $k$ by $1$, and we enlarge the set $\mathcal A(k)$ by incorporating these individuals into $\mathcal A(k-1)$, and we iterate the procedure, 
until the termination criterion $\mathcal A(k) = \mathcal A(k-1)$, which implies that no more individuals would change action. According to this procedure, we get a non-decreasing sequence of sets. When the termination criterion  is met, the algorithm returns $\mathcal A_f$. 

This algorithm, for which a computationally-improved pseudo-code is reported in Algorithm~\ref{alg}, offers a tool to solve Problem~\ref{problem} in a polynomial time, as summarized in the following statement, whose proof is reported in Appendix~\ref{sec:proof2}.
\begin{algorithm}
\caption{Equilibrium computation
\label{alg}
}
\KwData{$\mat{A},\mat{W},\mathcal C^{X}, \mathcal C^{Y}, \lambda_i$ and $\beta_i,$ for all $i\in\mathcal U$}
\KwResult{$\mathcal A_f:=\mathcal A(k)$, i.e., individuals with $x^*=+1$}
$k \gets 1;\; \mathcal A(0) \gets \emptyset;\; \mathcal A(1) \gets \mathcal C^{X}$; ${\hat y}_i \gets +1$ $\forall i \in \mathcal{C}^{Y}$\;
$\mat{M}\gets(\mat{I}-(\mat I- \text{diag}(\vect\lambda))\mat W)^{-1}$\;
\While{$\mathcal A(k) \neq \mathcal A(k-1)$}{
Define $\vect{\hat x}$ using \eqref{eq:x-candidate};

${\hat y}_i\gets (\mat M \text{diag}(\vect \lambda) \vect{\hat x})_i$ for all $i \notin \mathcal{C}^{Y}$ ;

$k \gets k+1;\;
\mathcal A(k) \gets \mathcal A(k-1)$\;
check    \For{$i \in \mathcal V \And i \notin\mathcal A(k)$}
    {
        \If{$\delta_i(\vect{\hat x},\vect{\hat y})>0$} 
        {$\mathcal A(k) \gets \mathcal A(k) \cup \{i\}$;}
    }
}
\end{algorithm}

\begin{theorem}\label{th:algorithm}
     Under Assumptions~\ref{a:activation} and~\ref{a:initial_condition}, Algorithm~\ref{alg} solves Problem~\ref{problem} in time $O(n^3)$. In fact, given control sets $(\mathcal C^{X},\mathcal C^{Y})$ and output $\mathcal A_f$ of Algorithm~\ref{alg}, then
$$
        \phi(\mathcal C^{X},\mathcal C^{Y})=\left\{\begin{array}{ll}1&\text{if }\mathcal A_f=\mathcal V,\\
        0&\text{otherwise.}\end{array}\right.
$$
    Moreover, the equilibrium reached by a controlled coevolutionary dynamics that satisfies Assumptions~\ref{a:activation} and~\ref{a:initial_condition} with control sets $(\mathcal C^{X},\mathcal C^{Y})$ is $(\vect{x}^*,\vect y^{*})$, with $\vect{x^*}$ defined as in \eqref{eq:x-candidate} with $\mathcal A(k)=\mathcal A_f$ and $\vect{y}^*$ the solution of \eqref{y-equilibrium} given $\vect{x}^*$.
\end{theorem}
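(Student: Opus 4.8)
The plan is to establish three things in sequence: (i) the linear system~\eqref{y-equilibrium} has a unique solution for any fixed action vector $\vect{\hat x}$, justifying the matrix inversion $\mat M=(\mat I-(\mat I-\diag(\vect\lambda))\mat W)^{-1}$ used in Algorithm~\ref{alg}; (ii) the algorithm terminates after at most $n$ iterations and the returned pair $(\vect{x}^*,\vect y^*)$ is exactly the equilibrium reached by the controlled dynamics from the initial condition of Assumption~\ref{a:initial_condition}; and (iii) the termination set $\mathcal A_f$ satisfies $\mathcal A_f=\mathcal V$ if and only if $\phi(\mathcal C^X,\mathcal C^Y)=1$. First, for uniqueness, I would note that $\mat I-(\mat I-\diag(\vect\lambda))\mat W$ is nonsingular: since $\mat W$ is stochastic and $\lambda_i\in(0,1]$, the matrix $(\mat I-\diag(\vect\lambda))\mat W$ has nonnegative entries with row sums $(1-\lambda_i)\le 1$, and because each $\lambda_i>0$ and $\mat W$ is irreducible, its spectral radius is strictly less than $1$. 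Hence the inverse exists (and is the convergent Neumann series), the solution $\vect{\hat y}$ is unique, and restricting to coordinates $i\notin\mathcal C^Y$ while fixing $\hat y_i=+1$ on $\mathcal C^Y$ still yields a unique solution by the same diagonal-dominance argument on the reduced system.

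Second, I would analyze the iteration as a monotone fixed-point scheme. The key structural fact, inherited from Theorem~\ref{th:convergence} and the closed form in Proposition~\ref{prop:dynamics}, is that the opinion map $\vect{\hat x}\mapsto\vect{\hat y}$ is monotone (entrywise nondecreasing), because $\mat M=\sum_{k\ge 0}((\mat I-\diag(\vect\lambda))\mat W)^k$ has nonnegative entries and $\diag(\vect\lambda)$ is nonnegative. Consequently, enlarging $\mathcal A(k)$ raises $\vect{\hat x}$, which raises $\vect{\hat y}$, which can only raise each $\delta_i$, so once a node enters $\mathcal A(k)$ it never leaves and the sequence $\mathcal A(1)\subseteq\mathcal A(2)\subseteq\cdots$ is nondecreasing. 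Since it is bounded by $\mathcal V$, it stabilizes in at most $n$ steps. At termination, $\delta_i(\vect{\hat z})\le 0$ for every $i\notin\mathcal A_f$, and $\vect{\hat y}$ solves~\eqref{y-equilibrium}, so by Proposition~\ref{prop:dynamics} the pair $(\vect{x}^*,\vect y^*)$ is a genuine fixed point: no active node would revise its action and the opinion update leaves $\vect y^*$ invariant.

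Third, and this is where the main obstacle lies, I must show that this computed fixed point is precisely the equilibrium the \emph{dynamics} converges to under any admissible revision sequence (Assumption~\ref{a:activation}), not merely \emph{an} equilibrium. The argument has two halves. For the lower bound, I would show by induction on $t$ that the true trajectory $\vect z(t)$ is dominated by the algorithm's candidate at the matching iteration: any node the algorithm declares to switch will, under the monotonicity of Theorem~\ref{th:convergence} and Assumption~\ref{a:activation} guaranteeing activation within $T$ steps, actually switch in finite time and stay switched, so $\mathcal A_f\subseteq\{i:x_i^*=+1\}$. For the upper bound, I would use monotone convergence from Theorem~\ref{th:convergence}: the limit of $\vect x(t)$ is itself an equilibrium with action consensus structure reachable from the initial condition, and I must argue it cannot exceed $\mathcal A_f$ — because any node switching in the dynamics has $\delta_i>0$ at a state dominated by the final candidate, which forces it into $\mathcal A_f$ by the algorithm's stopping rule. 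Pinning down this coupling between the asynchronous dynamics and the synchronous algorithmic sweep, and verifying the two domination inductions meet exactly, is the delicate part; the $\phi$ characterization then follows immediately since $\vect x(t)\to\vect 1$ (equivalently the desired consensus is reached with probability one) precisely when $\mathcal A_f=\mathcal V$. Finally, the $O(n^3)$ complexity follows from the single matrix inversion ($O(n^3)$) plus at most $n$ iterations each costing $O(n^2)$ for the matrix–vector product and the $\delta_i$ checks.
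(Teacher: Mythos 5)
Your overall architecture matches the paper's: (i) uniqueness of $\vect{\hat y}$ given $\vect{\hat x}$ (the paper gets this by recognizing the frozen-action opinion subsystem as a Friedkin--Johnsen dynamics with a unique fixed point, whereas you invert the matrix directly via a row-sum/Neumann-series bound --- both work, and your spectral argument does not even need irreducibility since every row sum of $(\mat I-\diag(\vect\lambda))\mat W$ is $1-\lambda_i<1$); (ii) monotone growth and termination of $\mathcal A(k)$; (iii) identification of the algorithm's fixed point with the dynamical limit; (iv) the $O(n^3)$ count. Parts (i), (ii) and (iv) are sound.

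The genuine gap is in the ``lower bound'' half of your step (iii), which you yourself flag as delicate but then leave unresolved. You claim that a node $i$ with $\delta_i(\vect{\hat z})>0$ at the candidate equilibrium ``will, under the monotonicity of Theorem~\ref{th:convergence} and Assumption~\ref{a:activation}, actually switch in finite time.'' Monotonicity plus guaranteed activation is not enough: monotonicity only says $\delta_i(\vect z(t))$ is nondecreasing along the trajectory, and at the actual (not yet converged) state one may well have $\delta_i(\vect z(t))\le 0$ even though $\delta_i(\vect{\hat z})>0$, so the node activates and does nothing. Worse, the domination induction you propose ($\vect z(t)\le\vect{\hat z}$) yields $\delta_i(\vect z(t))\le\delta_i(\vect{\hat z})$ --- an upper bound, i.e., the wrong direction for forcing a switch. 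The missing ingredient, which is the crux of the paper's Lemma~\ref{lem:equilibrium}, is that while the action vector is frozen at $\vect{\hat x}$, the opinion subsystem converges asymptotically \emph{from below} to the unique solution $\vect{\hat y}$ of \eqref{y-equilibrium}; hence $\delta_i(\vect z(t))\to\delta_i(\vect{\hat z})>0$, and by strictness of the inequality there is a finite $\tilde t$ with $\delta_i(\vect z(t))>0$ for all $t\ge\tilde t$ (monotonicity of $\delta_i(\vect z(t))$ in $t$ then ensures this persists even if other actions flip in the meantime). Only at that point do Assumption~\ref{a:activation} and Proposition~\ref{prop:dynamics} force the switch, and Theorem~\ref{th:convergence} prevents a reversal. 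Without this asymptotic-convergence-plus-continuity step, the inductive coupling between the algorithm's sweep and the trajectory does not close, and the claim $\mathcal A_f\subseteq\{i:x_i^*=+1\}$ is unsupported. Your upper-bound half (no node outside $\mathcal A_f$ ever switches, since $\vect z(t)\le\vect{\hat z}$ and $\delta_i(\vect{\hat z})\le 0$ there) is fine and is essentially how the paper closes the argument via Lemma~\ref{cor:delta_neg}.
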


In most practical scenarios, only a few iterations are needed for convergence, since $\mathcal A(k)$ often increases by more than one individual at each iteration, further reducing the computational effort needed. Moreover, the heaviest operation is the computation of matrix $\mat M=(\mat I - [\vect 1-\diag(\vect \lambda)]\mat W)^{-1}$, which is used to solve \eqref{y-equilibrium}. However, $\mat M$ is independent of the control sets. Hence, it can be precomputed and used for multiple instances of Algorithm~\ref{alg}. Finally, one can leverage symmetry properties of the network structure to reduce the dimension of the system, as illustrated in the following example.

\subsection{Complete Graph}\label{sec:complete}
We illustrate a motivational example showing how our results can be used to analyze the coevolutionary dynamics and the possibility to control it. Specifically, we consider the case of a complete graph (including self-loops) with homogeneous parameters and weights, in which, due to symmetry reasons, the strategy adopted guarantees a solution for both Problems~\ref{problem} and~\ref{problem2}. 
 In order to check if a candidate control set succeeds in complete network controllability we use Algorithm \ref{alg}.
\begin{assumption}[Homogeneous complete graph]
\label{a:homogeneous}
Let $\mathcal G$ be a two-layer network with $a_{ij}=w_{ij}= \frac{1}{n-1}$ and $a_{ii}=w_{ii}=0$, $\forall\,i\neq j \in \mathcal{V}$. Moreover, let $\lambda_i=\lambda$ and $\beta_i=\beta$,  $\forall\,i\in\mathcal V$.
\end{assumption}

In \cite{raineri2024}, we have analysed the scenario of a complete graph with joint control (see Remark~\ref{rem:3scenarios}).  Here, we focus on the other two scenarios of interest discussed in Remark~\ref{rem:3scenarios}. More precisely, we assume that we are able to control a certain number of individuals (whose position is irrelevant, due to the network symmetry) and  we also consider the two scenarios of opinion and action control. Using Algorithm~\ref{alg} and Theorem~\ref{th:algorithm}, we  establish the following result. 

	\begin{proposition}\label{prop:complete}
Consider a coevolutionary dynamics that satisfies Assumptions \ref{a:activation}--\ref{a:initial_condition} on a complete graph $\mathcal G$ with $n$ nodes that satisfies Assumption \ref{a:homogeneous}.   Given $\mathcal C\subseteq \mathcal V$, let  $\gamma=\frac{|\mathcal C|}{n-1}$. Then, 
\begin{itemize}
           \item[i)] for opinion control, $\phi(\emptyset,\mathcal C)=1$  iff it holds
    \begin{equation}\label{eqn:complete_controlY}
           \frac{\beta[3(1-\lambda)\gamma +2\lambda\gamma (1-\lambda)+\lambda(2\lambda-1)]}{\gamma +\lambda-\lambda\gamma}>1;
    \end{equation}
    \item[ii)] for action control, $\phi(\mathcal C,\emptyset)=1$ iff $\gamma > 1/2$;
\item[iii)] for joint control, $\phi(\mathcal C,\mathcal C)=1$ iff it holds
\begin{equation}\label{eqn:complete_controlXY}
    2 \beta (1-\lambda)\Big(\frac{   \gamma-\lambda+ \lambda\gamma}{\gamma +\lambda-\lambda\gamma}\Big)+(1-\beta)(2\gamma-1)>0.
\end{equation} 
    \end{itemize}
	\end{proposition}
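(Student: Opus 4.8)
The plan is to exploit the full symmetry of Assumption~\ref{a:homogeneous} to collapse the $2n$-dimensional dynamics onto a handful of node \emph{classes}, and then to run Algorithm~\ref{alg} on this reduced description, invoking Theorem~\ref{th:algorithm} so that in each scenario $\phi=1$ if and only if the set $\mathcal A_f$ returned by the algorithm equals $\mathcal V$. Since $a_{ij}=w_{ij}=1/(n-1)$ for $i\neq j$, two nodes that are controlled in the same way and currently play the same action are indistinguishable; hence at every iteration the candidate equilibrium is described by at most the opinion value of a controlled class (fixed at $+1$) and of an uncontrolled-action class, and the switching margin $\delta_i(\vect{\hat z})$ takes a common value within each class.

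First I would compute the candidate opinions by solving the symmetric version of \eqref{y-equilibrium}. Writing $\gamma=|\mathcal C|/(n-1)$ and denoting by $p$ the common opinion of an uncontrolled node that currently plays action $-1$, the linear system collapses to a scalar equation with solution $p=\frac{(1-\lambda)\gamma-\lambda}{\gamma+\lambda-\lambda\gamma}$; in the configuration in which every uncontrolled node already plays $+1$ one finds instead that all opinions equal $+1$. Next I would invoke Theorem~\ref{th:convergence}: since $\mat M=(\mat I-(\mat I-\diag(\vect\lambda))\mat W)^{-1}\geq 0$, the candidate opinion vector is monotone in $\vect{\hat x}$, so along the cascade generated by Algorithm~\ref{alg} each margin $\delta_i$ is nondecreasing as $\mathcal A(k)$ grows. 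Consequently the cascade reaches $\mathcal V$ exactly when it takes its first step, and the problem reduces to the sign of $\delta$ at the initial candidate $\mathcal A(1)=\mathcal C^{X}$.

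It then remains to evaluate this initial margin in each scenario. For joint and action control the initial candidate already has the $\mathcal C$-nodes at action $+1$, the uncontrolled nodes share a single margin, and a direct substitution of $p$ (joint control) or of the identity ``average opinion $=$ average action'' obtained by summing \eqref{y-equilibrium} over all nodes (action control) yields $\delta=2\beta(1-\lambda)\frac{\gamma-\lambda+\lambda\gamma}{\gamma+\lambda-\lambda\gamma}+(1-\beta)(2\gamma-1)$ and $\delta\propto(2\gamma-1)\,[1+\beta(1-2\lambda)]$ respectively, giving \eqref{eqn:complete_controlXY} and $\gamma>1/2$. For opinion control the initial candidate has all actions at $-1$, the representative uncontrolled-opinion node has margin $\delta=2\beta(1-\lambda)\frac{\gamma(1+\lambda)-\lambda}{\gamma+\lambda-\lambda\gamma}-(1-\beta)$, and collecting the $\beta$-terms reproduces exactly the numerator of \eqref{eqn:complete_controlY}.

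The step I expect to be most delicate is the opinion-control cascade, where the controlled-opinion nodes keep a free action and, because they ``see'' one fewer $+1$ opinion than their uncontrolled neighbours, are the last to switch; one must rule out the algorithm stalling at $\mathcal V\setminus\mathcal C$. I would close this gap by a termwise comparison: in the configuration in which all uncontrolled nodes play $+1$ every opinion equals $+1$ and the controlled node's action-neighbourhood average is $\geq -1$, so its margin dominates the initial uncontrolled margin of \eqref{eqn:complete_controlY}; hence whenever the cascade starts it also completes, making the first-step inequality both necessary and sufficient. A secondary technical point is that the clean thresholds neglect $O(1/n)$ integer corrections arising from the $1/(n-1)$ normalisation (most visibly the exact action-control crossover sits at $2|\mathcal C|=n$); these are immaterial in the large-population reading in which the statement is intended, the sign of each margin being controlled by $2\gamma-1$ through the strictly positive prefactor $1+\beta(1-2\lambda)$.
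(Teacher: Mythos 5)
Your proposal is correct and follows essentially the same route as the paper: reduce by symmetry to node classes, solve the collapsed version of \eqref{y-equilibrium} for the candidate opinions, and apply Algorithm~\ref{alg} together with Theorem~\ref{th:algorithm}, so that everything hinges on the sign of the first-iteration margin $\delta_i$. Two minor remarks: your explicit two-class argument ruling out the cascade stalling at $\mathcal V\setminus\mathcal C$ in the opinion-control case is a welcome refinement that the paper's proof glosses over, and in the action-control case your prefactor $1+\beta(1-2\lambda)$ is the correct algebra (the paper writes $(1-2\lambda)\beta\lambda+1$), though both are nonnegative and yield the same threshold $\gamma>1/2$.
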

\begin{proof}
   In all cases, we apply Algorithm \ref{alg} and Theorem~\ref{th:algorithm}.

    i) We start with $\mathcal A(1)=\emptyset$.
    The candidate equilibrium $\vect{\hat z}$ has action vector $\vect{\hat x}=-\vect{1}$ and opinion vector $\vect{\hat y}$
    with $\hat{y}_i=+1$ for all $i\in\mathcal{C}$, and by solving  \eqref{y-equilibrium}  for all $i\notin\mathcal{C}$. By symmetry, all  $i\notin\mathcal{C}$ have necessarily $\hat{y}_i=+1$, solution of $$\hat{y}_i=(1-\lambda)(\gamma+(1-\gamma)\hat y_i)-\lambda,$$ 
    which yields $$\hat y_i=\frac{(1-\lambda)\gamma -\lambda}{1-(1-\lambda)(1-\gamma)}.$$ Substituting this in  $\delta_i(\vect{\hat z})$, we get that $\delta_i(\vect{\hat z})>0$ iff the condition in \eqref{eqn:complete_controlY} is satisfied. In this scenario, $A(2)=\mathcal V$; otherwise, $A(2)=A(1)$. In both cases the algorithm terminates, and Theorem~\ref{th:algorithm} yields  claim i).
   
   ii) We start with $\mathcal A(1)=\mathcal C$. The corresponding candidate equilibrium $\vect{\hat z}$ has $\vect{\hat x}$ defined using \eqref{eq:x-candidate}. Conversely, for the opinion vector $\vect{\hat y}$ we have to distinguish two different cases: $\hat y_C$ indicates the equilibrium value for those nodes whose action is controlled, $\hat y_U$ the one for those nodes which are not controlled at all. The equilibrium vector is so defined as follows:
   \begin{equation}\label{eq:systemC}
   \begin{cases}
       \hat y_C &= (1-\lambda)[\gamma \hat y_C+(1-\gamma)\hat y_U]+\lambda\\
       \hat y_U &= (1-\lambda)[\gamma \hat y_C +(1-\gamma)\hat y_U]-\lambda.
   \end{cases}
   \end{equation}
   Solving \eqref{eq:systemC} and substituting the solution in  \eqref{eq:delta} for any $i\notin\mathcal C$, we obtain 
   $\delta_i(\vect{\hat z})=(2\gamma -1)((1-2\lambda)\beta \lambda+1)$. It is easy to verify that $((1-2\lambda)\beta \lambda+1)>0$ for any choice of  $\lambda$ and $\beta$, yielding the condition $\gamma>1/2$. 
   If $\gamma>1/2$, then $\delta_i(\vect{\hat z})>0$, and $\mathcal A(2)=\mathcal V$; otherwise,  $\mathcal A(2)=\mathcal A(1)=\mathcal C$. In both cases, the algorithm terminates, and Theorem~\ref{th:algorithm} yields ii).

   iii) We apply Theorem~\ref{th:algorithm} to \cite[Proposition~3]{raineri2024}.
\end{proof}

\begin{remark}
Proposition~\ref{prop:complete} solves both Problems~\ref{problem} and~\ref{problem2} for a complete graph. In fact, given control sets $(\mathcal C^{X},\mathcal C^{Y})$, where  $\gamma=\frac{|\mathcal C^{X}\cup\mathcal C^{Y}|}{n-1}$, Problem~\ref{problem} is solved for those values of $\lambda$ and $\beta$ that satisfy the condition corresponding to the scenarios considered from Remark~\ref{rem:3scenarios}. On the other hand, given $\gamma$, the choice of the nodes to control is irrelevant, and re-writing \eqref{eqn:complete_controlY} and \eqref{eqn:complete_controlXY} as  conditions on $\gamma$, we can ultimately determine the minimum number of nodes to be controlled  to guarantee $\phi(\mathcal C^{X},\mathcal C^{Y})=1$, solving Problem~\ref{problem2}.
\end{remark}

From Proposition~\ref{prop:complete} we can draw some interesting conclusions. First, from item ii), we observe that for action control, the problem is exactly the same as controlling a majority on a complete network: we need more than 50\% of nodes being controlled. On the contrary, the conditions become  nontrivial when considering opinion control and joint control, since they depend on the model parameters. Predicatively, we can show that \eqref{eqn:complete_controlY} and $\gamma>1/2$ are always more restrictive than \eqref{eqn:complete_controlXY}, implying that joint control (if possible) is always preferable. Then, further observations can be made by plotting the minimal value of $\gamma$ that satisfies \eqref{eqn:complete_controlY} and \eqref{eqn:complete_controlXY}. 

These plots, reported in Fig.~\ref{fig:complete} using color-code intensity,  illustrate how the model parameters strongly impact the cardinality of the controlled set $\mathcal{C}$, especially for opinion control. In fact, we observe that for small values of $\beta$ and large $\lambda$, effective policies should entail opinion control of a large majority of the population. A similar dependency is also present for joint control, but it is less pronounced. Interestingly, when $\beta$ is large and $\lambda$ is small-to-moderate (top left corner in Fig.~\ref{fig:complete_opinion_control}), we observe that opinion control is more effective than action control. In fact, the required fraction of population to control can be less than $50\%$, which is instead the case for action control. In this vein, Proposition~\ref{prop:complete} can be used to design optimal policies to facilitate the network controllability or increase its robustness against malicious attacks. 

\begin{figure}
 \centering
\subfloat[Opinion Control]{
    \includegraphics[width=.46\linewidth]{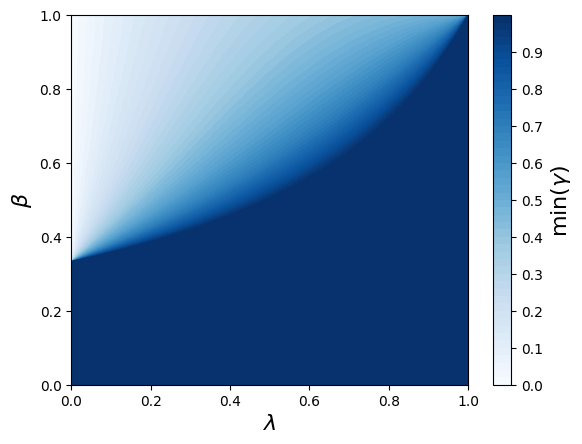}  \label{fig:complete_opinion_control}}\subfloat[Joint Control]{  \includegraphics[width=.46\linewidth]{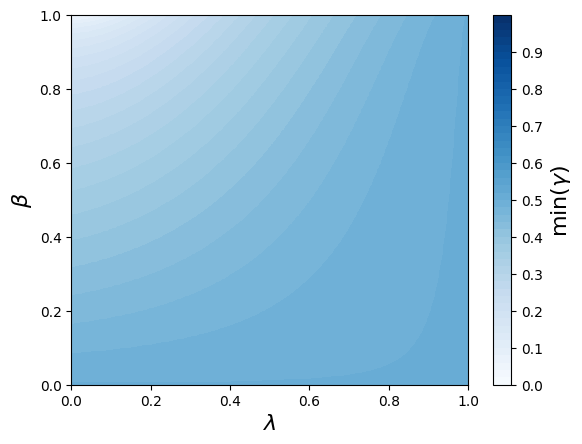}  \label{fig:complete_joint_control}}

   \caption{Results for a complete graph.  The color intensity represents the cardinality of the minimal control set $\gamma=|\mathcal C|/(n-1)$ that solves Problem~\ref{problem2}. }
    \label{fig:complete}
\end{figure}

\section{Minimal Control Set Identification} \label{sec:problem2}

In the previous section, we have shown how Algorithm~\ref{alg} can be used to solve our research problems for simple network structures such as a complete graph (Section~\ref{sec:complete}) or a star graph (see~\cite{raineri2024}). However, in more realistic scenarios, the network does not have such a level of symmetry, and Algorithm \ref{alg} is not efficient to solve Problem \ref{problem2}. In fact, in order to find the optimal control sets $\mathcal C^X$ and $\mathcal C^Y$, one should look for all pairs of subsets of $\mathcal V$ to find those that guarantee that the constraint $\phi(\mathcal C^X,\mathcal C^Y)=1$ is satisfied. First, we observe that Corollary~\ref{rem:joint} allows us to simplify the problem (and the notation), since we simply need to find an optimal control set $\mathcal C$, and then define the solution of Problem~\ref{problem2} as $(\mathcal C^X, \mathcal{C}^Y)$ with 
\begin{equation}\label{eqn:C_definition}
\mathcal C^X=\mathcal C\cap \mathcal V^X \; \text{and} \; \mathcal C^Y=\mathcal C\cap \mathcal V^Y\,.\end{equation} 
However, even with this simplification, the determination of the minimal control set remains computationally challenging. 

With an aim to reduce the computational cost to perform such a task, we now build on the optimal targeting algorithm proposed in \cite{Como2022supermodular} in order to design an algorithm able to solve Problem~\ref{problem2}. Intuitively, the rationale behind the proposed methodology consists of starting from controlling all the possible individuals and moving backwards, removing individuals from the control set, until we find the minimal control set that allows us to reach the goal in Problem~\ref{problem2}. However, such a naive approach might result in being stuck in a local minimum, not being able to reach the global optimum. For this reason, we employ a stochastic approach, which consists of defining a discrete-time Markov chain~\cite{levin2006book} that explores the space of control sets that are feasible solutions of Problem~\ref{problem2} in such a way that its invariant distribution will provably concentrate about the global optimal solutions of \eqref{eq:minimal2}, allowing us to define an effective heuristics to solve Problem~\ref{problem2}.

More precisely, given Problem~\ref{problem2} with constraints $\mathcal C^X\subseteq \mathcal V^X$ and $\mathcal C^Y\subseteq \mathcal V^Y$, we define the set of all \emph{controllable nodes} as
$$    \mathcal V^*:=\mathcal V^X\cup \mathcal V^Y,
$$
and we let $n^*:=|\mathcal V^*|$ to be the number of controllable nodes. Then, we can define the space of all \emph{potential control sets}, which is nothing but the power set of $\mathcal V^*$, i.e., $\mathscr C:=\{\mathcal C:\mathcal C\subseteq \mathcal V^*\}$. 
Moreover, we say that a potential control set $\mathcal C\in \mathscr{C}$ is \emph{admissible} if and only if $\phi(\mathcal C\cap \mathcal V^X,\mathcal C\cap \mathcal V^Y)=1$. In other words, an admissible control set is a feasible solution of \eqref{eq:minimal2}. We indicate the space of all admissible control sets as
$$
    \bar{\mathscr C}:=\{\mathcal C\subseteq \mathscr{C}:\phi(\mathcal C\cap \mathcal V^X,\mathcal C\cap \mathcal V^Y)=1\},
$$
which is clearly the set of all feasible solutions of Problem~\ref{problem2}.


Our algorithm starts from the worst case in which we control all the controllable nodes, i.e., we set $\mathcal C = \mathcal V^*$. First, we need to check whether $\mathcal C$ is admissible, i.e., we define $\mathcal{C}^X$ and $\mathcal{C}^Y$ as in \eqref{eqn:C_definition}, and check whether $\phi(\mathcal{C}^X, \mathcal{C}^Y)=1$. This check is done by employing Algorithm~\ref{alg}. If $\mathcal C$ is admissible, then it follows that $(\mathcal C^X,\mathcal C^Y)$ is a feasible solution   to Problem~\ref{problem2}. Conversely, if $\mathcal C$ is not admissible, then Problem~\ref{problem2} is unfeasible and there is no need to proceed. Trivially, we observe that if $\mathcal{V}^X=\mathcal{V}$ then $\mathcal C$ is always admissible, being $\mathcal C^X=\mathcal V$, and $\phi(\mathcal V,\mathcal C^Y)=1$ for any choice of $\mathcal C^Y$.

If Problem~\ref{problem2} is feasible, we then adopt the following iterative procedure, which is detailed in Algorithm \ref{alg:optimal_C}. At the $k$th iteration we start with the control set $\mathcal C$. We select a node $r$, uniformly at random among the controllable nodes, i.e., $r\in\mathcal V^*$. Then, two cases are possible:
    \begin{enumerate}
        \item The node belongs to the control set ($r\in\mathcal C$). In this case, if the set $\mathcal C\setminus\{r\}$ is admissible (which is checked using Algorithm~\ref{alg}), then the control set is updated to $\mathcal C\setminus\{r\}$;  otherwise it remains $\mathcal C$; or
        \item The node does not belong to the control set ($r\notin\mathcal  C$). In this case, we introduce a probability $\varepsilon\in(0,1]$, and  the node $r$ is added to the control set with probability $\varepsilon$, i.e, the control set is updated to $\mathcal C\cup\{r\}$;  otherwise, $\mathcal C$ remains unchanged.
    \end{enumerate}
The iteration counter is thus increased to $k+1$ and the process is repeated. As a design choice, we will consider a maximum number of iterations for the algorithm equal to $T$.

\begin{algorithm}
\caption{Optimal control set identification \label{alg:optimal_C}
}
\KwData{$\mat{A},\mat{W}, \lambda$, $\beta$, $\varepsilon$,  $n$, $T$, $\mathcal{V}^X$, and $\mathcal{V}^Y$ }
\KwResult{$\hat{\mathcal{C}}$ such that $(\hat{\mathcal C}^X, \hat{\mathcal{C}}^Y)$ solves Problem~\ref{problem2}  }
$k \gets 1;$ 
$\mathcal{C} \gets \mathcal{V}^X \cup \mathcal{V}^Y$; $\hat{\mathcal{C}}\gets \emptyset$\;
{\If{$\phi(\mathcal{C}^X, \mathcal{C}^Y)=1$}{
$\hat{\mathcal{C}}\gets \mathcal{C}$\;
{\While{$k < T$}{
$k \gets k+1$;
Choose at random a node $r \in \mathcal V^X$\;
\If{$r \in \mathcal C$}
{\If{ Algorithm \ref{alg} yields $\mathcal{A}_f = \mathcal{V}$}
{$\mathcal C \gets \mathcal C \backslash \{r\}$ \;
{\If{$|\mathcal{C}|<|\hat{\mathcal{C}}|$}
{$\hat{\mathcal{C}}\gets \mathcal{C}$\;
}}
}
}
\Else{
$\mathcal C\gets \mathcal C \cup \{r\}$ with probability $\varepsilon$ 
}
 }
 }
 }
 }
\end{algorithm}

Before formally presenting the Markov chain induced by this iterative process and illustrating how this can be used to solve Problem~\ref{problem2}, we offer here a simple example to elucidate the procedure described above. We consider a network with $n=4$ nodes and with $\mathcal V^*=\mathcal V$. We start, at an arbitrary iteration step, from a control set $\mathcal C=\{1,2,3\}$. Figure~\ref{fig:config_space} illustrates the three elements of $\mathscr{C}$ that can be reached from $\mathcal C$, depending on which node $r$ is selected. Assume that Algorithm~\ref{alg} prescribes that only the two sets on the left belong to $\mathscr{\bar C}$ and are thus admissible, while the third one is not admissible, and it is thus barred in Fig.~\ref{fig:config_space}. If 
nodes $r=2$ or $r=3$ are selected, then we are in step 1) 
and the chain has a transition to $\mathcal{C}\setminus\{2\}$ or $\mathcal{C}\setminus\{3\}$, respectively. If instead node $r=1$ is selected, we are in step 1) 
but the chain remains in $\mathcal{C}$, since $\mathcal{C}\setminus\{1\}$ is not admissible. Finally, if node $r=4$ is selected, we are in step 2) 
and the chain transitions to $\mathcal{C}\cup\{4\}$ with probability $\varepsilon$, as illustrated in Fig.~\ref{fig:single flip}.


   \begin{figure}
       \centering
       \subfloat[Admissible and not admissible sets]{\includegraphics{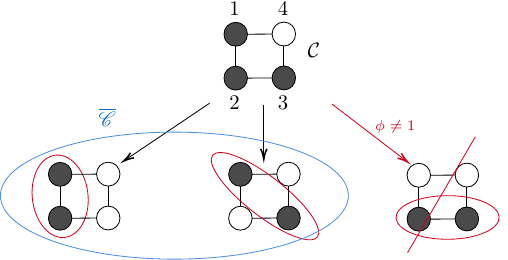}\label{fig:config_space}}\\
       \subfloat[Transitions of the chain]{\includegraphics[width=.98\linewidth]{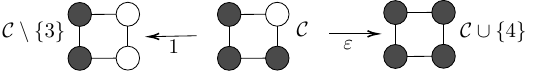}   \label{fig:single flip}}
       \caption{Example of one iteration of the Markov chain from set $\mathcal{\bar C}$. In (a), we highlight with a blue circle the  set of admissible control sets.  The red circles highlight all the candidate control sets. The last control set is not admissible and so it is not considered. In (b), we illustrate the transitions of the chain, described in \eqref{eq:MC_probabilities}. Nodes in the control set are denoted in black, nodes not in the control set in white. }
       \label{fig:config_space_all}
   \end{figure}

\begin{proposition}\label{prop:MC}
The procedure described in Algorithm~\ref{alg:optimal_C} induces a   discrete-time Markov chain $Z_\varepsilon(t)$, which is defined on the space of admissible control sets $\bar{\mathscr C}$, has initial state $Z_\varepsilon(t)=\mathcal V^*$, and, given any pair $\mathcal A,\mathcal B\in \bar{\mathscr C}$, its transition probabilities are defined as
$$
    \mathcal P[Z_\varepsilon(t+1)=\mathcal B|Z_\varepsilon(t)=\mathcal A]=P_{\mathcal A, \mathcal B,\varepsilon},
$$
where
\begin{equation}\label{eq:MC_probabilities}
P_{\mathcal A, \mathcal B,\varepsilon} = \begin{cases}
    1/n^* \quad& \text{if}\;\mathcal B \subset \mathcal A \; \text{and}\; |\mathcal B|= |\mathcal A|-1, \\
    \varepsilon / n^* \quad& \text{if}\; \mathcal B \supset \mathcal A \;\text{and}\; |\mathcal B|= |\mathcal A|+1, \\
    1-\alpha_\varepsilon(\mathcal A) \quad& \text{if}\; \mathcal B = \mathcal A, \\
    0 \quad& \text{otherwise},
\end{cases}
\end{equation}
with $\alpha_\varepsilon(\mathcal A) = \frac{\varepsilon(n^*-|\mathcal A|)+n_c(\mathcal A)}{n^*}$, 
where $n_c(\mathcal A)$ is the number of admissible configuration that can be reached by removing a node from $\mathcal A$, i.e., $n_c(\mathcal A):=|\{\mathcal C\in\mathscr{\bar C}:\mathcal C=\mathcal A\setminus\{r\},r\in\mathcal A\}|$. 
\end{proposition}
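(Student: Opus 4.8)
The plan is to establish three facts in order: that the sequence of control sets produced by Algorithm~\ref{alg:optimal_C} is a time-homogeneous discrete-time Markov chain, that this chain takes values in the admissible sets $\bar{\mathscr C}$, and that its one-step kernel is exactly the one in \eqref{eq:MC_probabilities}. The Markov property itself I would read directly off the algorithm: at each iteration the updated control set is a deterministic function of the current set $\mathcal C$, of a node $r$ drawn uniformly from $\mathcal V^*$, and---in the addition branch---of an independent Bernoulli trial with success probability $\varepsilon$. Since these random inputs are sampled afresh and independently of the past at every step, the conditional law of the next set depends on the history only through the present set, and the update rule does not depend on the iteration counter, giving a time-homogeneous chain; its initial state is $\mathcal V^*$ by the initialization of Algorithm~\ref{alg:optimal_C}.

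Next I would check that the dynamics keeps the chain inside $\bar{\mathscr C}$. The starting set $\mathcal V^*$ is admissible precisely when Problem~\ref{problem2} is feasible, which is the entry check performed before the main loop. For the inductive step, a removal $\mathcal C\mapsto\mathcal C\setminus\{r\}$ is executed only after Algorithm~\ref{alg} certifies that $\mathcal C\setminus\{r\}$ is admissible, and is otherwise rejected, so it cannot leave $\bar{\mathscr C}$; an addition $\mathcal C\mapsto\mathcal C\cup\{r\}$ produces a superset of the admissible set $\mathcal C$, which is again admissible by the superset-monotonicity established in Lemma~\ref{lemma:intermediate  C} (noting that intersecting with the fixed sets $\mathcal V^X,\mathcal V^Y$ preserves the superset relation). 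Hence every state reachable from $\mathcal V^*$ lies in $\bar{\mathscr C}$.

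Finally I would obtain the transition probabilities by enumerating the outcomes of the uniform draw of $r$ at a fixed $\mathcal A\in\bar{\mathscr C}$. A single iteration changes $\mathcal C$ by at most one element, so $P_{\mathcal A,\mathcal B,\varepsilon}=0$ whenever $\mathcal B\neq\mathcal A$ and $\mathcal B$ differs from $\mathcal A$ by more than one node. An admissible set $\mathcal B=\mathcal A\setminus\{r\}$ is reached exactly when $r$ is the selected node, contributing $1/n^*$; a set $\mathcal B=\mathcal A\cup\{r\}$ is reached when $r$ is selected and the coin succeeds, contributing $\varepsilon/n^*$. For the self-loop probability I would instead sum the chances of leaving $\mathcal A$: the $n_c(\mathcal A)$ admissible removals each carry weight $1/n^*$ and the $n^*-|\mathcal A|$ candidate additions each carry weight $\varepsilon/n^*$, so $P_{\mathcal A,\mathcal A,\varepsilon}=1-\alpha_\varepsilon(\mathcal A)$ with $\alpha_\varepsilon(\mathcal A)=\big(\varepsilon(n^*-|\mathcal A|)+n_c(\mathcal A)\big)/n^*$, as stated. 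As a consistency check I would recover the same value by directly counting self-loop outcomes---selections of a node whose removal is inadmissible, together with selected additions whose coin fails---confirming that each row of the kernel sums to one.

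Since the computation is essentially bookkeeping, I do not anticipate a genuine obstacle; the only step that requires a previously proved result rather than direct inspection of the pseudocode is the closure of $\bar{\mathscr C}$ under node additions, which is exactly the monotonicity in Lemma~\ref{lemma:intermediate  C}, and the only real care is to keep the self-loop accounting consistent with the admissible removals so that the transition kernel is a genuine stochastic matrix.
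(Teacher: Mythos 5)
Your proposal is correct and follows essentially the same route as the paper's proof: an inductive argument (via the entry feasibility check, the explicit admissibility test on removals, and Lemma~\ref{lemma:intermediate  C} for additions) to show the chain stays in $\bar{\mathscr C}$, followed by a direct enumeration of the outcomes of the uniform node draw to recover the kernel in \eqref{eq:MC_probabilities}, with the self-loop probability as the complement. The only additions beyond the paper's argument are the explicit row-sum consistency check and the remark that intersecting with $\mathcal V^X,\mathcal V^Y$ preserves the superset relation, both of which are harmless refinements rather than a different method.
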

\begin{proof}
    First, we observe that the iterative procedure described in Algorithm~\ref{alg:optimal_C} explores only admissible control set. We proceed by induction. If Problem~\ref{problem2} is feasible, then we start from $\mathcal C=\mathcal V^*$, which is feasible. Then, if at the $k$th iteration the set  $\mathcal C$ is an admissible control set,  we demonstrate that this holds true also at iteration $k+1$. In fact, if 1) 
    occurs, then either $\mathcal C\setminus\{r\}$ is  admissible by construction or the control set remains unchanged; if 2) 
    occurs, then the control set is updated to a superset  of $\mathcal C$ (possibly coinciding with $\mathcal C$ with probability $1-\varepsilon$), which is  admissible due to Lemma~\ref{lemma:intermediate  C}, yielding that Algorithm~\ref{alg:optimal_C} induces a stochastic process with state space $\mathscr{\bar C}$. 
    
    Second, let us denote by
    $\mathcal A$ and $\mathcal B$ the control set a the $k$th and $(k+1)$th iteration of Algorithm~\ref{alg:optimal_C}, respectively. Preliminary, we observe that, given $\mathcal A$, $\mathcal B$ is is independent of the previous history of the process; hence, it is a Markov chain~\cite{levin2006book}. Then, if node $r$ (selected uniformly at random among $n^*$ nodes) is such that $r\in\mathcal A$, then $\mathcal B=\mathcal A\setminus\{r\}$. Hence, a generic set $\mathcal B\subset A$ with $|\mathcal B|=|\mathcal A|-1$ is  reached with probability $1/n^*$, yielding the first line in \eqref{eq:MC_probabilities}. If the node $r\notin\mathcal A$, then $\mathcal B=\mathcal A\cup\{r\}$ with probability $\varepsilon$. Hence, a generic set $\mathcal B\supset A$ with $|\mathcal B|=|\mathcal A|+1$ is  reached with probability $\varepsilon/n^*$, yielding the second line in \eqref{eq:MC_probabilities}. No other state can be reached according to the algorithm.   
    Therefore, we conclude that these probabilities match exactly those in \eqref{eq:MC_probabilities}, where the third line is simply obtained as the probability of the complementary event, yielding the claim.
\end{proof}

The Markov chain defined in Proposition~\ref{prop:MC} plays a key role in solving Problem \ref{problem2}, as claimed in the following statement. 
\begin{theorem}\label{th:convergence_MC}
    Algorithm~\ref{alg:optimal_C} induces a Markov chain whose invariant distribution $\mu_\varepsilon\in[0,1]^{\mathscr{\bar C}}$ is such that $\lim_{\varepsilon \searrow 0} \mu_\varepsilon = \mu$ where $\mu$ is the uniform probability distribution on the set of solutions of \eqref{eq:minimal2} and, consequently, of Problem~\ref{problem2}.
\end{theorem}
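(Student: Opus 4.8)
The plan is to exploit the structure of the transition probabilities in \eqref{eq:MC_probabilities} to show that $Z_\varepsilon(t)$ is \emph{reversible}, to write down its stationary distribution in closed form, and then to perform an elementary asymptotic analysis as $\varepsilon\searrow 0$. The natural ansatz, suggested by the fact that a single-node removal occurs with probability $1/n^*$ while the reverse single-node addition occurs with probability $\varepsilon/n^*$, is
$$
\mu_\varepsilon(\mathcal C)=\frac{\varepsilon^{|\mathcal C|}}{Z_\varepsilon},\qquad Z_\varepsilon:=\sum_{\mathcal C\in\bar{\mathscr C}}\varepsilon^{|\mathcal C|}.
$$

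First I would establish that $Z_\varepsilon(t)$ is irreducible on $\bar{\mathscr C}$, which for a finite chain guarantees a unique invariant distribution (and the positive holding probabilities $1-\alpha_\varepsilon(\mathcal A)>0$ for $\varepsilon<1$ additionally make it aperiodic). This is where Lemma~\ref{lemma:intermediate  C} is essential: from any admissible $\mathcal A$ one reaches $\mathcal V^*$ by adding nodes one at a time, each addition yielding a superset of an admissible set---hence admissible---and carrying positive probability $\varepsilon/n^*$; conversely, from $\mathcal V^*$ one reaches any admissible $\mathcal B$ by removing the nodes of $\mathcal V^*\setminus\mathcal B$ one at a time, each intermediate set containing $\mathcal B$ and thus admissible by Lemma~\ref{lemma:intermediate  C}, with positive probability $1/n^*$. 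Chaining these two paths proves irreducibility.

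Next I would verify the detailed balance equations $\mu_\varepsilon(\mathcal A)P_{\mathcal A,\mathcal B,\varepsilon}=\mu_\varepsilon(\mathcal B)P_{\mathcal B,\mathcal A,\varepsilon}$ for every pair $\mathcal A,\mathcal B\in\bar{\mathscr C}$. These hold trivially when $\mathcal A=\mathcal B$ or when the two sets differ by more than one node (both rates vanish by \eqref{eq:MC_probabilities}). For the only nontrivial case, $\mathcal B=\mathcal A\setminus\{r\}$ with both sets admissible, \eqref{eq:MC_probabilities} gives $P_{\mathcal A,\mathcal B,\varepsilon}=1/n^*$ and $P_{\mathcal B,\mathcal A,\varepsilon}=\varepsilon/n^*$, so the balance equation reduces to $\varepsilon^{|\mathcal A|}=\varepsilon\cdot\varepsilon^{|\mathcal A|-1}$, an identity. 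Reversibility together with irreducibility then certifies that $\mu_\varepsilon$ is the unique invariant distribution.

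Finally I would take the limit. Letting $m:=\min_{\mathcal C\in\bar{\mathscr C}}|\mathcal C|$ be the optimal value of \eqref{eq:minimal2} and $\mathcal M:=\{\mathcal C\in\bar{\mathscr C}:|\mathcal C|=m\}$ the set of optimizers, factoring $\varepsilon^m$ out of $Z_\varepsilon$ yields $Z_\varepsilon=\varepsilon^m(|\mathcal M|+O(\varepsilon))$, whence $\mu_\varepsilon(\mathcal C)\to 1/|\mathcal M|$ for $\mathcal C\in\mathcal M$ and $\mu_\varepsilon(\mathcal C)=\varepsilon^{|\mathcal C|-m}/(|\mathcal M|+O(\varepsilon))\to 0$ for $|\mathcal C|>m$. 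This gives $\lim_{\varepsilon\searrow 0}\mu_\varepsilon=\mu$, the uniform distribution on $\mathcal M$, which is precisely the set of solutions of \eqref{eq:minimal2} and hence of Problem~\ref{problem2}. I expect the main obstacle to be the irreducibility step, since it is the only place requiring a careful appeal to the monotonicity of admissibility (Lemma~\ref{lemma:intermediate  C}) to guarantee that the descending path from $\mathcal V^*$ never leaves $\bar{\mathscr C}$; the reversibility check and the limiting computation are then routine.
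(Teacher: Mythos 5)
Your proof is correct and follows essentially the same route as the paper: the same two-leg path argument (climbing to $\mathcal V^*$ and then descending to the target, with Lemma~\ref{lemma:intermediate  C} guaranteeing every intermediate set is admissible) establishes ergodicity, the invariant distribution is the same $\varepsilon^{|\mathcal C|}$ form up to normalization, and the limiting computation is identical. The only difference is that where the paper invokes \cite[Theorem 2]{Como2022supermodular} to obtain the closed form of $\mu_\varepsilon$, you verify it directly via the detailed balance equations, which makes the argument self-contained.
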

\begin{proof}
We observe that the Markov chain $Z_\varepsilon(t)$ with transition probabilities in  \eqref{eq:MC_probabilities} is ergodic, since every admissible configuration $\mathcal B\in\mathscr{\bar C}$ can be reached from any other one $\mathcal A\in\mathscr{\bar C}$ following a path of non-zero probability (trivially, first by adding nodes to  $\mathcal A$ until reaching $\mathcal V^*$, and then by removing nodes until reaching $\mathcal B$)~\cite{levin2006book}. This path passes only through admissible control sets, due to Lemma~\ref{lemma:intermediate  C}. Hence, the chain converges to an invariant distribution. 

To compute its invariant distribution, we follow the arguments from  \cite[Theorem 2]{Como2022supermodular}, to conclude that $\mathcal Z_\varepsilon(t)$ has invariant distribution $\mu_\varepsilon\in[0,1]^{\mathscr{\bar C}}$, such that its generic component associated with $\mathcal C\in\mathscr{\bar C}$ is equal to $[\mu_\varepsilon]_{\mathcal C}=\frac{1}{K_\varepsilon}\varepsilon^{|\mathcal C|}$, where $K_\varepsilon$ is a normalizing coefficient. Hence, for $\varepsilon\searrow 0$, it holds that the only non-zero components of $\mu_\varepsilon$ are all equal and are those associated with the admissible control sets $\mathcal C$ of minimal cardinality, yielding the claim. 
\end{proof}

\begin{remark}
    The convergence result proved in Theorem~\ref{th:convergence_MC} guarantees that, in the limit $t\to\infty$, the Markov chain defined in Proposition~\ref{prop:MC} concentrates about the solution(s) of Problem~\ref{problem2}. From a practical point of view, in Algorithm~\ref{alg:optimal_C} we incorporate a variable $\mathcal{\hat C}$ to keep track of the minimal control set reached so far, so that it could be used also as an heuristics for admissible configurations space exploration. 
\end{remark}
\begin{remark}
The computations required to solve Algorithm \ref{alg:optimal_C} may be  reduced. First, the most computationally complex operation is  associated with the solution of \eqref{y-equilibrium} in Algorithm~\ref{alg}. However, this operation, which is performed by using matrix $\mat M$, is independent of the iteration. Hence, Algorithm~\ref{alg:optimal_C} can be optimized by computing the matrix $\mat M$ before starting the iterations and providing it as an input to Algorithm~\ref{alg}, so that the inversion of a matrix needs to be performed only once. Second,  if $\mathcal{V}^Y=\emptyset$, i.e., we enforce action control, then in order to verify that a control set obtained by removing a node $r$ from an admissible control set is admissible, we need not apply Algorithm~\ref{alg} thoroughly. In fact, it is enough to stop the iteration in Algorithm~\ref{alg} as soon as $\delta_r(\vect{\hat x},\vect{\hat y})>0$, due to Lemma~\ref{lemma:intermediate  C} and the monotonicity property of actions $\vect x$. 
\end{remark}

\section{Case Study}\label{sec:case}

\begin{figure}
    \centering
    \includegraphics[width=\linewidth]{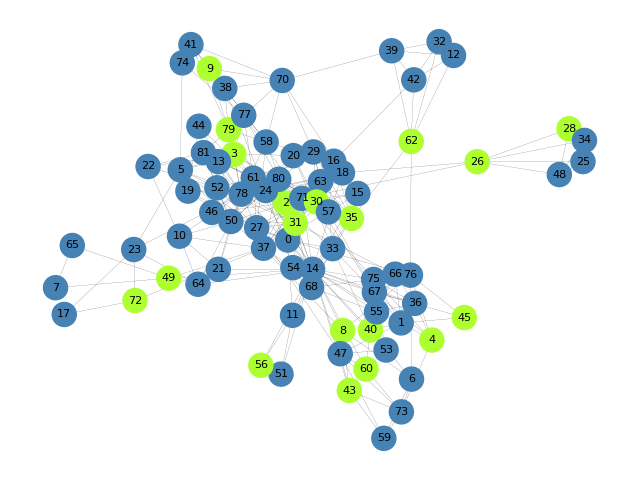}\vspace{-.3cm}
    \caption{The network used in  Section~\ref{sec:case}. Green nodes are those identified by our algorithm as control nodes, in the scenario of joint control.  }
    \label{fig:malawi}
\end{figure}

We demonstrate our approach on a real-world case study of a network associated with social contacts in a village in rural Malawi, whose dataset is available on Sociopatterns~\cite{malawi_net}. We use the largest connected component of the network, consisting of 84 individuals and 346 weighted undirected edges, illustrated in Fig.~\ref{fig:malawi}. 
We apply Algorithm~\ref{alg:optimal_C} with the exploration parameter $\varepsilon$ set to be $\varepsilon=0.1$ and $\varepsilon=0.6$, and we keep track of how the quality of the optimal solution (in terms of fraction of nodes to be controlled) evolves as the number of iterations increases. The results are reported in Fig.~\ref{fig:malawi_comparison}. 

\begin{figure}
 \centering
\subfloat[$\varepsilon = 0.1 $]{
    \includegraphics[width=0.48\linewidth]{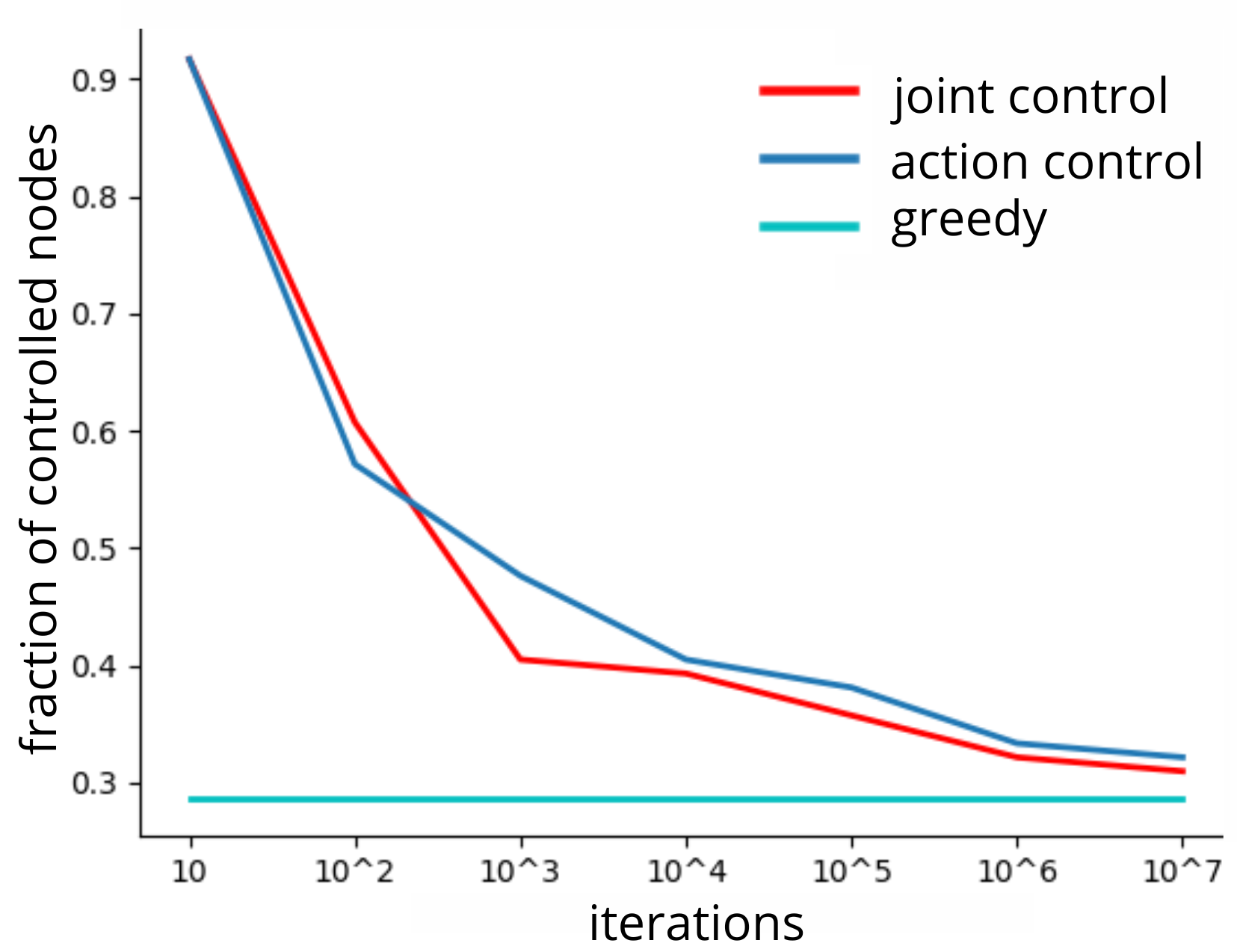}   \label{fig:Malawi01_comparison}}
\subfloat[$\varepsilon = 0.6 $]{\includegraphics[width=0.48\linewidth]{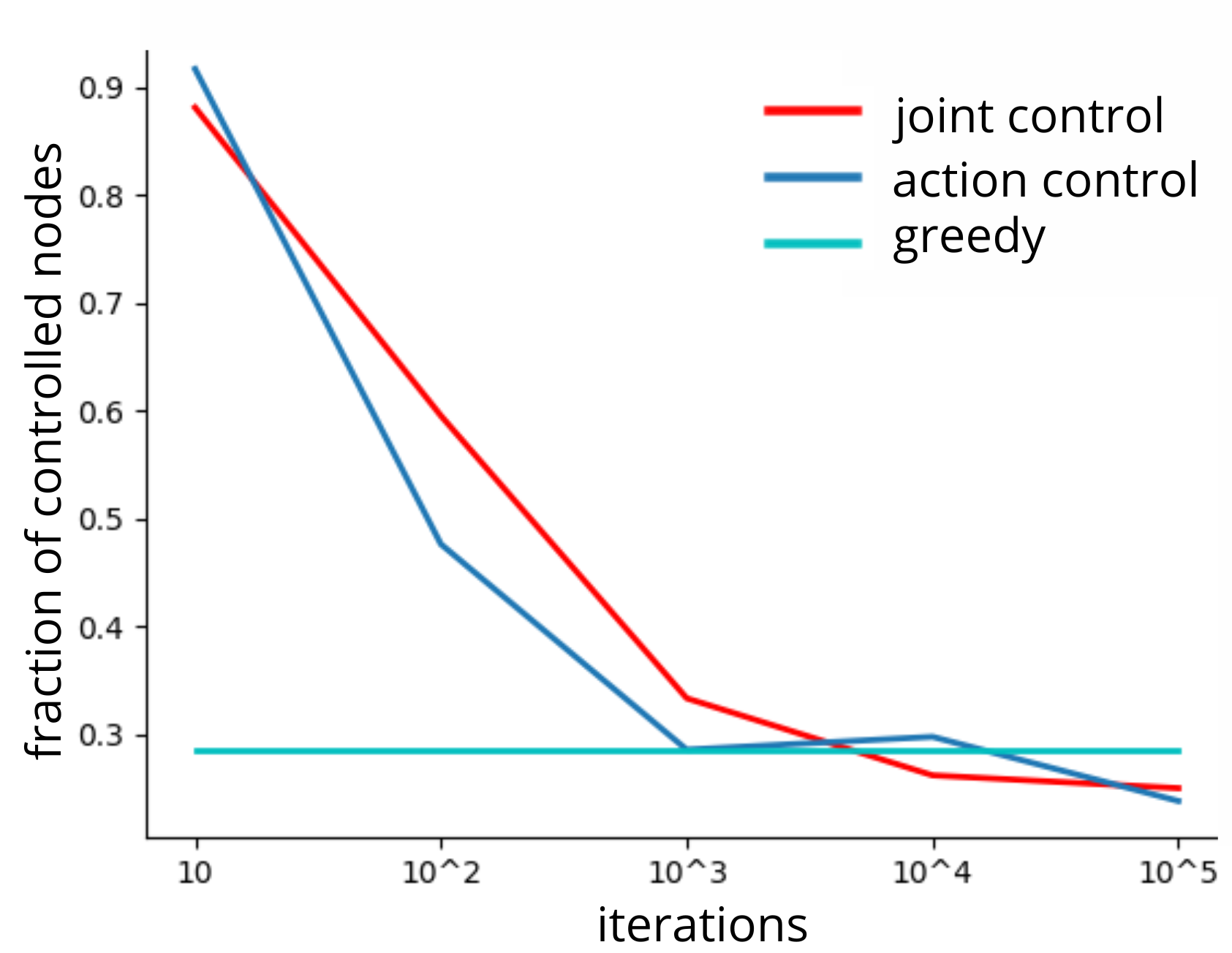}  \label{fig:Malawi01_comparison_esp06}}
   \caption{Comparison of control strategies for different values of $\varepsilon$. Blue and red curves represent the fraction of agents to be controlled with joint and action control, respectively. In cyan, the results obtained with a greedy heuristics. }
    \label{fig:malawi_comparison}
\end{figure}

First, we observe that increasing the number of iterations of the algorithm significantly reduces the fraction of nodes that need to be controlled. As expected, in the long run, joint control outperforms action control, i.e., controlling only the $x$ values. However, since the performance difference is not substantial, controlling only the agents' actions, which one expects would require a weaker enforcement policy, could be a valid choice to reduce costs. On the contrary, opinion control is not sufficient to steer the system to the desired consensus.

Second, a remarkable result is highlighted in Fig.~\ref{fig:malawi_comparison}. Specifically, for a sufficiently large value of  $\varepsilon$ and number of algorithm iterations,  Algorithm~\ref{alg:optimal_C} outperforms a greedy heuristics algorithm that selects nodes with larger Bonacich centrality as control nodes. For instance, given $\varepsilon=0.6$, after $100,000$ iterations (performed in only 70 seconds on a standard PC) the minimum control set cardinality required is reduced to $23.8\%$, outperforming the $26.2\%$ obtained with the greedy algorithm (cyan line in Fig.~\ref{fig:Malawi01_comparison_esp06}). This underlines not only the effectiveness of our algorithm, but also the importance of fine-tuning of its parameters based on the specific case study at hand, as the optimal parameters may vary significantly depending on the context. Finally, it is worth noticing that the fraction of nodes needed to be controlled in this real-world case study is comparable with the critical mass of innovators observed in different experimental studies~\cite{centola2018experimental_tipping,ye2021nat}, providing further empirical support to the coevolutionary model studied in this paper.


\section{Conclusion}\label{sec:conclusion}

In this paper, we formalized a novel control problem for social networks by incorporating a committed minority in a coevolutionary model of actions and opinions. By analyzing the controlled model, we established a general convergence result and we leverage it to tackle two research problems: i) determine whether the committed minority are able to steer the  population to the desired state and ii) identify the minimal control set needed to achieve the goal. By developing algorithms to address these two questions, we offer a novel set of effective tools to assess the robustness of social systems against malicious attacks and assist policy makers in designing policies to promote social change.


The promising results presented in this paper outline several lines of future research. First, while this paper focuses on static control policies, one could conjecture that, after a critical mass is reached, one may uplift the control action. Future research should focus on designing dynamic control policies, towards minimizing the total control effort. 
Second, this paper focuses on the problem of steering a population to a consensus. In some applications, however, policy makers may want to favor diversity by reaching a non-consensus state. Extending our methodology to different control objectives is thus a key objective of our future research. Third, in this paper, we have focused on minimizing the number of controlled nodes. Generalizing \cite{raineri2024}, this paper already differentiates between the possibilities of controlling only actions or only opinions. However, to further extend the applicability of the model, different nodes may be associated with different control costs, as well as the fact that the cost may differ between opinion and action control. Hence, a natural extension of our control problem involves generalizing our theoretical tools to optimize non-trivial control costs functions. 
Fourth, the improved performance of Algorithm~\ref{alg:optimal_C} in identifying the optimal control set compared to classical heuristics suggests that our approach can be possibly extended to similar control problems for other multi-dimensional supermodular games. 

\section*{Acknowledgments}

The authors are indebted to Giacomo Como and Fabio Fagnani for precious discussion.

\appendix

\subsection{Proof of Proposition~\ref{prop:super}}\label{sec:proof_super}
Being the domain compact and the utility function upper-semicontinuous, to prove supermodularity, we need to check that \eqref{eq:utility} has increasing differences~\cite{topkis1998book}, i.e., that given $\vect{z_i}'\geq \vect{z_i}$ and $\vect{z_{-i}}'\geq \vect{z_{-i}}$, it holds that $$\Delta_i(\vect{z_i}',\vect{z_i},\vect{z'}):=u_i(\vect{z_i}',\vect{z'})-u_i(\vect{z_i},\vect{z'})\geq \Delta_i(\vect{z_i}',\vect{z_i},\vect{z_{-i}}).$$ Using \eqref{eq:utility}, we compute    \begin{equation}\label{eq:delta}\begin{array}{l}
\Delta_i(\vect{z_i}',\vect{z_i},\vect{z_{-i}})={\lambda_i(1-\beta_i)}\sum_{j\in\mathcal V}a_{ij}(x_i'-x_i)x_j\\\quad+2\beta_i(1-\lambda_i)\sum_{j\in\mathcal V}w_{ij}y_j(y_i'-y_i)+\psi(\vect{z_i}',\vect{z_i}),
    \end{array}\end{equation}
    where $\psi(\vect{z_i}',\vect{z_i})$ is a function that depends only on $\vect{z_i}'$ and $\vect{z_i}$. Hence, being $x_i'\geq x_i$ and $y_i'\geq y_i$, \eqref{eq:delta} is monotonic nondecreasing in $x_j$ and $y_j$, $j\in\mathcal V$. Hence, $\Delta_i(\vect{z_i}',\vect{z_i},\vect{z_{-i}}')\geq \Delta_i(\vect{z_i}',\vect{z_i},\vect{z_{-i}})$ for any $\vect{z_{-i}}'\geq \vect{z_{-i}}$, yielding the claim.\qed

\subsection{Proof of Theorem~\ref{th:convergence}}\label{sec:proof_convergence}
Consider the controlled game under Assumption~\ref{a:initial_condition}. 
 Proposition~\ref{prop:super} guarantees that the game is supermodular. Moreover, we observe that under Assumption~\ref{a:initial_condition}, \eqref{eq:br}, as previously introduced, can be equivalently characterized as the minimal best response $(x_i(t+1),y_i(t+1))=\min(\argmax u_i(\vect{z_i},\vect{z_{-i}}))$~\cite{Milgrom1990}. It is known from~\cite{Milgrom1990} that the trajectory of a minimal best response with  smallest initial conditions is monotonically non-decreasing~\cite{Milgrom1990}. Hence, since Assumption~\ref{a:initial_condition} fixes the initial condition for all the uncontrolled nodes to $(-1, -1)$, then for all trajectories, it holds $\vect{x}(t)\geq \vect{x}(t-1)$ and $\vect{y}(t)\geq \vect{y}(t-1)$ for all $t\geq 0$. Note that the same monotonicity results can be obtained by using the explicit dynamics in Proposition~\ref{prop:dynamics}, following the arguments used for the special case of joint control in~\cite[Lemmas~1--2]{raineri2024}. 
 Finally, monotonicity implies convergence due to the monotone convergence theorem~\cite{bartle1976}. \qed

\subsection{Proof of Lemma~\ref{lemma:intermediate  C}}\label{sec:proof_intermediate}
To prove the first claim, we observe from \eqref{eq:delta} that $\delta_i(\vect z)$ is a monotonic nondecreasing function of $\vect{z}$. Consequently, from \eqref{eq:dinamics}, $\vect{x}(t+1)$ and $\vect{y}(t+1)$ are monotonic nondecreasing functions of $\vect{z}(t)$. Let us define the controlled coevolutionary dynamics $\vect{z}(t)$, with initial condition $\vect{z}(0)$ according to \eqref{eq:control_set_def}. Let us consider control sets $\mathcal {\bar C}^X\supset \mathcal C^X$ and $\mathcal {\bar C}^Y\supset \mathcal C^Y$, and corresponding dynamics $\vect{\bar z}(t)$ with initial conditions defined according to according to \eqref{eq:control_set_def}. Then, fixed any common activation sequence for $\vect{z}(t)$ and $\vect{\bar z}(t)$, the monotonicity properties described in the above yields $\vect{\bar z}(t)\geq \vect{z}(t)$. Finally, for almost every activation sequence we have that $\vect{x}(t)\to\vect{1}$, being $\phi(\mathcal C^X,\mathcal C^Y)=1$. Consequently, also $\vect{\bar x}(t)\to\vect{1}$ for all configurations (except for a set of measure zero), yielding the claim. The second claim is proved following a similar (symmetric) argument. \qed

\subsection{Proof of Corollary~\ref{cor:joint}}\label{sec:proof_joint}
  Let $\mathcal C^X$, $\mathcal C^Y$ be an optimal solution of \eqref{eq:minimal}. Then, we define $\mathcal C:=\mathcal C^X\cup\mathcal C^Y$, $\mathcal {\hat C}^X:=\mathcal C\cap \mathcal V^X$, and $\mathcal {\hat C}^Y:=\mathcal C\cap \mathcal V^Y$. Clearly, $\mathcal {\hat C}^X\supseteq \mathcal {C}^X$ and $\mathcal {\hat C}^Y\supseteq \mathcal {C}^Y$. Hence, by Lemma~\ref{lemma:intermediate  C}, since $\phi(\mathcal C^X,\mathcal C^Y)=1$, also $\phi(\mathcal {\hat C}^X,\mathcal {\hat C}^Y)=1$. Moreover,  $\mathcal C^X\subseteq \mathcal V^X$ and $\mathcal C^Y\subseteq \mathcal V^Y$ by construction, so $(\mathcal {\hat C}^X,\mathcal {\hat C}^Y)$ is a feasible solution of \eqref{eq:minimal}. Finally, we observe that $\mathcal {\hat C}^X\cup\mathcal {\hat C}^Y=\mathcal C$. Hence, $|\mathcal {\hat C}^X\cup\mathcal {\hat C}^Y|=|\mathcal {C}^X\cup\mathcal {C}^Y|=|\mathcal C|$, and $(\mathcal {\hat C}^X,\mathcal {\hat C}^Y)$ is an optimal solution of \eqref{eq:minimal}, yielding the claim.\qed

\subsection{Proof of Theorem~\ref{th:np}}\label{sec:proof_np}
First, we recall that when $\lambda_i=1$, for all $i\in\mathcal V$, our model reduces to a majority game, for which it is well-known that finding the minimal control set is NP-hard~\cite{Como2022supermodular}. Hence, also Problem~\ref{problem2} is NP-hard. We now show that it belongs to the NP-class, which yields NP-completeness. 
To prove that Problem~\ref{problem2} is NP, we demonstrate that, given an instance of the coevolutionary dynamics and a control set $(\mathcal C^{X}, \mathcal C^{Y})$, we can check whether $(\mathcal C^{X}, \mathcal C^{Y})$ is a feasible solution of \eqref{eq:minimal} in a polynomial time. In other words, if there exists an algorithm able to solve Problem~\ref{problem} in polynomial time, then clearly one can determine whether $\phi(\mathcal C^{X}, \mathcal C^{Y})=1$, while checking whether $\mathcal C^{X}$ is a subset of $\mathcal V^{X}$ and $\mathcal C^{Y}$ of $\mathcal V^{Y}$ can be checked in time $O(n)$ using, e.g., an hash function. To prove that there exists an algorithm to solve Problem~\ref{problem} in polynomial time, we refer to Theorem~\ref{th:algorithm}, which proves that Algorithm~\ref{alg} solves  Problem~\ref{problem} in time $O(n^3)$, yielding the claim. \qed

\subsection{Proof of Proposition~\ref{prop:sub}}\label{sec:proof_sub}
We build a counterexample. Consider $2$ nodes connected by a link with $w_{12}=a_{12}=w_{21}=a_{21}=1/3$ and $w_{11}=a_{11}=w_{22}=a_{22}=2/3$, and $\mathcal R(t)=\mathcal V$, for all $t$. We consider $\mathcal C^{X}_1=\{1\}$, $\mathcal C^{X}_2=\{2\}$, and $\mathcal C^{Y}=\mathcal V$.
Clearly, it holds $\mathcal C^{X}_1\cup \mathcal C^{X}_2=\mathcal V$, which implies that $\phi(\mathcal C^{X}_1\cup\mathcal C^{X}_2,\mathcal C^{Y})=\phi(\mathcal V,\mathcal V)=1$. On the other hand, for control sets $\mathcal C^{X}_1,\mathcal C^{Y})$, we immediately observe that the only state that can change is $x_2(t)$. At the first time instant $t$ at which $2\in\mathcal R(t)$, which occurs in finite time due to Assumption~\ref{a:activation}, individual $2$ switches to $+1$ iff $\delta_2(\vect z(t)) = 2 \beta_2 (1-\lambda_2) - (1- 
 \beta_2)>0$, according to Proposition~\ref{prop:dynamics}. By symmetry, a similar condition holds for individual $1$ when $\mathcal C^{X}_2$, exchanging the role of individuals $1$ and $2$. This implies that
$$\phi(\mathcal C^{X}_i,\mathcal C^{Y})=\left\{\begin{array}{ll}1&\text{if }\lambda_{3-i}\leq\frac{3\beta_{3-i}-1}{2\beta_{3-i}},\\0&\text{otherwise}.\end{array}\right.
$$
Hence, if we set $\beta_1=\beta_2=1/2$ and $\lambda=2/3$, then $\phi(\mathcal C^{X}_1,\mathcal C^{Y})=\phi(\mathcal C^{X}_2,\mathcal C^{Y})=0$, which implies that the condition $\phi(\mathcal C^{X}_1,\mathcal C^{Y})+\phi(\mathcal C^{X}_2,\mathcal C^{Y})\geq  \phi(\mathcal S\cup \mathcal T,\mathcal C^{Y})+ \phi(\mathcal S\cap \mathcal T,\mathcal C^{Y})$ required by submodularity~\cite{topkis1998book} is not satisfied by the first variable. Similar, we can build a counterexample to prove that the function is not submodular also with respect to the second variable, yielding the claim.\qed

\subsection{Proof of Theorem~\ref{th:algorithm}}\label{sec:proof2}

We start by proving the following result.

\begin{lemma}
\label{cor:delta_neg}
Given control sets $(\mathcal C^{X},\mathcal C^{Y})$ and a vector $\vect{\hat x}\in\{-1,1\}^n$, let $\vect{\hat y}$ be the solution of \eqref{y-equilibrium} given  $\vect{\hat x}$. 
Then, configuration $\vect{\hat z}=(\vect{\hat x}, \vect{\hat y})$ is an equilibrium for the controlled coevolutionary dynamics  under Assumptions~\ref{a:activation}--\ref{a:initial_condition} iff 
 $\hat x_i \delta_i(\vect{\hat z})\geq 0$, $\forall\,i \notin \mathcal C^X$. If there exists at least an individual $i\notin\mathcal C^X$ with $\hat x_i \delta_i(\vect{\hat z})< 0$, there exist no  equilibria with $\vect{x}=\vect{\hat x}$. 
\end{lemma}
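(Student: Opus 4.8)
The plan is to reduce the equilibrium condition to a condition on actions alone, exploiting the fact that $\vect{\hat y}$ is built precisely to make the opinion layer a fixed point whenever the action layer is. First I would record that the linear system \eqref{y-equilibrium} has a unique solution: the matrix $\mat I-(\mat I-\diag(\vect\lambda))\mat W$ is invertible because $\mat W$ is stochastic and $\lambda_i>0$ force $(\mat I-\diag(\vect\lambda))\mat W$ to have row sums $1-\lambda_i<1$, hence spectral radius strictly below $1$. Then I would observe that, at \emph{any} equilibrium of the controlled dynamics whose action vector equals $\vect{\hat x}$, every uncontrolled opinion node $i\notin\mathcal C^Y$ must satisfy the fixed-point relation obtained from \eqref{y-dinamic} by substituting $s(\vect{\hat z})=\hat x_i$ (since at an equilibrium the action cannot change). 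This relation is exactly \eqref{y-equilibrium}, so by uniqueness the opinion at any equilibrium with action $\vect{\hat x}$ is forced to coincide with the given $\vect{\hat y}$. This observation will underpin the second, stronger claim.

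Next I would characterize action stability. By Assumption~\ref{a:activation}, every uncontrolled node activates at least once within each window of $T$ steps, so $\vect{\hat z}$ is an equilibrium if and only if no uncontrolled node changes its action upon activation, i.e.\ $s(\vect{\hat z})=\hat x_i$ for all $i\notin\mathcal C^X$. Using the definition of $s$ from Proposition~\ref{prop:dynamics}, a short case analysis on the sign of $\hat x_i$ yields the claimed equivalence: if $\hat x_i=+1$ the requirement $s(\vect{\hat z})=+1$ amounts to $\delta_i(\vect{\hat z})\geq 0$ (the case $\delta_i=0$ preserves the current action $+1$), whereas if $\hat x_i=-1$ it amounts to $\delta_i(\vect{\hat z})\leq 0$; both are captured uniformly by $\hat x_i\,\delta_i(\vect{\hat z})\geq 0$. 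Combined with the first paragraph---which guarantees that once $s(\vect{\hat z})=\hat x_i$ holds the opinion update automatically reproduces $\hat y_i$ because $\vect{\hat y}$ solves \eqref{y-equilibrium}---this establishes the stated iff.

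Finally, for the last claim, suppose some uncontrolled node $j$ satisfies $\hat x_j\,\delta_j(\vect{\hat z})<0$. By the first observation, any candidate equilibrium with action $\vect{x}=\vect{\hat x}$ must have opinion exactly $\vect{\hat y}$, hence coincide with $\vect{\hat z}$; but at $\vect{\hat z}$ node $j$ flips its action whenever it activates, which by Assumption~\ref{a:activation} occurs in finite time, contradicting equilibrium. Therefore no equilibrium with $\vect{x}=\vect{\hat x}$ can exist. I expect the main subtlety to be exactly this step: one must invoke the \emph{uniqueness} of the opinion solution to rule out the \emph{a priori} possibility of a distinct equilibrium sharing the same action vector $\vect{\hat x}$ but carrying a different opinion vector, rather than merely observing that the specific configuration $\vect{\hat z}$ fails to be an equilibrium.
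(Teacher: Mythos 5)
Your proposal is correct and follows essentially the same route as the paper: reduce the problem to (i) uniqueness of the opinion vector compatible with the fixed action vector $\vect{\hat x}$, and (ii) a sign condition on $\delta_i(\vect{\hat z})$ for the uncontrolled actions, with Assumption~\ref{a:activation} forcing the flip in finite time for the non-existence claim. The only cosmetic difference is that you establish uniqueness of $\vect{\hat y}$ directly via invertibility of $\mat I-(\mat I-\diag(\vect\lambda))\mat W$ (spectral radius below one), whereas the paper recasts the frozen-action opinion update as a Friedkin--Johnsen dynamics and cites its known convergence to the unique fixed point.
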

\begin{proof}
Fixed the action vector $\vect{\hat x}$ and  the opinion of $j\in\mathcal C^{Y}$ to $\hat y_j=1$ (because of Assumption~\ref{a:initial_condition}), the  dynamics in \eqref{y-dinamic} for a generic individual $i\notin \mathcal C^Y$ reduces to 
$$
 \begin{array}{ll}
    y_i(t+1)=\displaystyle(1-\lambda_i)\Big[\sum_{j \notin \mathcal C^Y}{w}_{ij}y_j(t)+\sum_{j \in \mathcal C^{Y}}{w}_{ij}\hat y_i\Big] + \lambda_i \hat x_i\\\qquad\qquad=\displaystyle(1-\tau_i)\sum\nolimits_{j \notin \mathcal C^Y} \bar {w}_{ij}y_j(t) + \tau_i u_i,
 \end{array}$$
 with $$\tau_i=1-(1-\lambda_i)\sum\nolimits_{j\notin\mathcal C^Y}w_{ij},$$ $$\bar w_{ij}=\frac{w_{ij}}{\sum_{k\notin\mathcal C^Y}w_{ik}},$$ and $$u_i=\frac{(1-\lambda_i)(1-\sum_{j\notin \mathcal C^Y}w_{ij})+\lambda_i\hat x_i}{1-(1-\lambda_i)\sum_{j\notin \mathcal C^Y}w_{ij}},$$
 with the convention that, $\bar w_{ij}=0$ if $w_{ij}=0$.  This can be seen as the update rule of a Friedkin--Johnsen opinion dynamics model~\cite{proskurnikov2017tutorial}, which is known to converge under Assumption~\ref{a:activation} to the unique solution of $$\hat y_i=(1-\tau_i)\sum\nolimits_{j \notin \mathcal C^Y} \bar {w}_{ij}\hat y_j + \tau_i u_i,$$ which coincides with \eqref{y-equilibrium}. See,~\cite{proskurnikov2018tutorial_2} for more details. Hence, $\vect{\hat z} = (\vect{\hat x}, \vect{\hat y})$ is the only admissible equilibrium with action vector equal to $\vect{\hat x}$.
 
 Now, we observe that $\vect{\hat z}$ is an equilibrium iff there are no individuals that would change their action according to \eqref{x-dinamic} when the system is at $\vect{\hat z}$. This corresponds to verify that all individuals $i \notin \mathcal C^{X}$ with $\hat x_i = -1$ have $\delta_i(\vect{\hat z})\leq0$, and all those 
 with $\hat x_i = 1$ have $\delta_i(\vect{\hat z})\geq0$. In fact, if there exists $i\notin \mathcal C^X$ with $\hat x_i=-1$ and $\delta_i(\vect{\hat z}) >0$, then Assumption~\ref{a:activation} guarantees that within a finite time-window, $i$ activates and flips action to $+1$ (being $\delta_i(\vect{\hat z}) >0$).
\end{proof}

Now, we use Lemma~\ref{cor:delta_neg} to prove the following result.

\begin{lemma}
\label{lem:equilibrium}
The equilibrium reached by a controlled coevolutionary dynamics that satisfies Assumptions~\ref{a:activation} and~\ref{a:initial_condition} with control sets $(\mathcal C^{X},\mathcal C^{Y})$ is $(\vect{x}^*,\vect y^{*})$, with $\vect{x^*}$ defined as in \eqref{eq:x-candidate} with $\mathcal A(k)=\mathcal A_f$ (output of Algorithm~\ref{alg}) and $\vect{y}^*$ solution of \eqref{y-equilibrium} given $\vect{x}^*$.
\end{lemma}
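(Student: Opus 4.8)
The plan is to show that the set $\mathcal A_f$ returned by Algorithm~\ref{alg} coincides with $\mathcal S^*:=\{i\in\mathcal V:x_i^*=+1\}$, where $(\vect{x}^*,\vect{y}^*)$ denotes the equilibrium actually reached by the controlled dynamics (which exists by Theorem~\ref{th:convergence}). Since Lemma~\ref{cor:delta_neg} already guarantees that the opinion vector of \emph{any} equilibrium whose action is $\vect{x}^*$ is the unique solution of~\eqref{y-equilibrium} given $\vect{x}^*$, identifying $\mathcal A_f=\mathcal S^*$ pins down both components simultaneously and yields the statement. So the whole task reduces to proving this single set identity.

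First I would record the two monotonicity facts on which everything rests. From~\eqref{eq:delta}, $\delta_i(\vect z)$ is nondecreasing in $\vect z$ (this was already established in the proof of Lemma~\ref{lemma:intermediate  C}). Moreover, the map sending a fixed action vector $\vect{\hat x}$ to the solution $\vect{\hat y}$ of~\eqref{y-equilibrium} (with the $\mathcal C^Y$-opinions held at $+1$) is monotone nondecreasing in $\vect{\hat x}$: on the uncontrolled coordinates it is an affine map whose linear part is the inverse of an M-matrix, namely the convergent nonnegative Neumann series $(\mat I-(\mat I-\diag(\vect{\lambda}))\mat W)^{-1}\ge 0$, the relevant matrix being substochastic with row sums $1-\lambda_i<1$ because $\lambda_i>0$. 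Hence a larger $\vect{\hat x}$ (and larger fixed opinions) yields a larger $\vect{\hat y}$. Consequently the candidate states $\vect{\hat z}^{(k)}$ produced along the iterations form a monotone nondecreasing sequence whenever the sets $\mathcal A(k)$ grow, and since $\mathcal A(k)\subseteq\mathcal V$ strictly increases until the stopping rule fires, termination occurs in at most $n$ iterations.

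Then I would establish the two inclusions. For $\mathcal A_f\subseteq\mathcal S^*$ I argue by induction on $k$ that $\mathcal A(k)\subseteq\mathcal S^*$: the base case holds since $\mathcal A(1)=\mathcal C^{X}\subseteq\mathcal S^*$; for the step, if $\mathcal A(k)\subseteq\mathcal S^*$ then $\vect{\hat x}^{(k)}\le\vect{x}^*$, so by opinion monotonicity $\vect{\hat z}^{(k)}\le\vect{z}^*$, and any $i$ added at this step (i.e.\ with $\delta_i(\vect{\hat z}^{(k)})>0$) satisfies $\delta_i(\vect{z}^*)>0$, which by Lemma~\ref{cor:delta_neg} forces $x_i^*=+1$, i.e.\ $i\in\mathcal S^*$. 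For the reverse inclusion $\mathcal S^*\subseteq\mathcal A_f$ I first verify that the terminal candidate $\vect{\hat z}^{(f)}$ is itself an equilibrium of the controlled dynamics: every $i\notin\mathcal A_f$ has $\delta_i(\vect{\hat z}^{(f)})\le 0$ by the stopping criterion, while every $i\in\mathcal A_f\setminus\mathcal C^{X}$ was added at an earlier iteration with $\delta_i>0$ and, by monotonicity of $\delta_i$ along the nondecreasing candidate sequence, still has $\delta_i(\vect{\hat z}^{(f)})\ge 0$, so Lemma~\ref{cor:delta_neg} applies. Since $\vect{\hat z}^{(f)}\ge\vect{z}(0)$, I then compare the actual trajectory (under its realized activation sequence) with the constant trajectory sitting at the equilibrium $\vect{\hat z}^{(f)}$; the monotone comparison principle already used in the proof of Lemma~\ref{lemma:intermediate  C} gives $\vect{z}(t)\le\vect{\hat z}^{(f)}$ for all $t$, whence $\vect{z}^*\le\vect{\hat z}^{(f)}$ and in particular $\mathcal S^*\subseteq\mathcal A_f$.

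Combining the two inclusions gives $\mathcal A_f=\mathcal S^*$, so $\vect{x}^*$ is exactly $\vect{\hat x}$ of~\eqref{eq:x-candidate} with $\mathcal A(k)=\mathcal A_f$, and by Lemma~\ref{cor:delta_neg} $\vect{y}^*$ is the solution of~\eqref{y-equilibrium} given $\vect{x}^*$. Because $\vect{\hat z}^{(f)}$ is independent of the activation sequence, this also shows all admissible sequences reach the same equilibrium, consistent with Theorem~\ref{th:convergence}. The step I expect to be most delicate is showing that nodes added early in the iteration retain $\delta_i\ge 0$ at the terminal candidate: adding one node perturbs the opinions of all others through~\eqref{y-equilibrium}, and it is precisely the nonnegativity of $(\mat I-(\mat I-\diag(\vect{\lambda}))\mat W)^{-1}$ together with the monotonicity of $\delta_i$ that guarantees these couplings can only push the relevant quantities upward, so no previously added node is ever invalidated.
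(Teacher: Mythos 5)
Your proof is correct, but it follows a genuinely different route from the paper's. The paper argues forward along the trajectory: at each iteration $k$ it shows that if $\mathcal A(k+1)\neq\mathcal A(k)$, the newly flagged individuals really do flip to $+1$ in finite time in the actual dynamics (using the fact that, while the action profile is momentarily frozen at $\vect{\hat x}$, the uncontrolled opinions evolve as a Friedkin--Johnsen-type system converging to $\vect{\hat y}$, so $\delta_i(\vect z(t))\to\delta_i(\vect{\hat z})>0$, combined with monotonicity of $\delta_i$ along the nondecreasing trajectory and Assumption~\ref{a:activation}); it then re-iterates until the terminal condition, at which point Lemma~\ref{cor:delta_neg} certifies the candidate as the reached equilibrium. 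You instead take convergence to some equilibrium $\vect z^*$ from Theorem~\ref{th:convergence} as given and prove the set identity $\mathcal A_f=\{i:x_i^*=+1\}$ by a two-sided sandwich: the lower inclusion via induction on $k$, comparing the monotone candidate sequence $\vect{\hat z}^{(k)}\le\vect z^*$ and invoking the equilibrium characterization of Lemma~\ref{cor:delta_neg} at $\vect z^*$; the upper inclusion by verifying that the terminal candidate is itself a fixed point dominating the whole trajectory by the monotone comparison principle. Your approach buys a cleaner, static argument that sidesteps the paper's somewhat delicate ``opinions converge while actions are frozen'' step, at the cost of leaning on Theorem~\ref{th:convergence} and on the order-preservation of both the best-response map and the map $\vect{\hat x}\mapsto\vect{\hat y}$ (which you correctly justify via nonnegativity of $(\mat I-(\mat I-\diag(\vect\lambda))\mat W)^{-1}$ restricted to the uncontrolled coordinates); the paper's approach is more constructive about when and why each node flips, and re-derives convergence to the specific equilibrium along the way. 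The step you flagged as delicate --- that previously added nodes retain $\delta_i\ge 0$ at the terminal candidate --- is indeed handled correctly by your monotonicity argument, and your observation that $\vect{\hat z}^{(f)}$ is independent of the activation sequence recovers the paper's conclusion that all admissible sequences reach the same equilibrium.
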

\begin{proof}
In the first iteration of the algorithm ($k=1$), Lemma~\ref{cor:delta_neg} establishes that state $\vect{\hat z}$ defined using  \eqref{eq:x-candidate} with set $\mathcal A(1)$ and \eqref{y-equilibrium} is an equilibrium iff $\mathcal A(2)=\mathcal A(1)$. Otherwise, we will now prove that individuals in $\mathcal A(2)\setminus\mathcal A(1)$ will eventually switch action to $+1$. 
In fact, as long as $\vect{x}(t)=\vect{\hat x}$, then $y_j(t)$ converges asymptotically to $\hat y_j$ for all $j\notin\mathcal C^Y$ (due to the observations made in the proof of Lemma~\ref{cor:delta_neg}). Hence, $\delta_i(\vect{z}(t))$ converges asymptotically to $\delta_i(\vect{\hat z})>0$. By continuity, $\exists\,\tilde t$ such that $\delta_i(\vect{z}(t))>0$ for all $t\geq \tilde t$, as long as $\vect{x}(t)=\vect{\hat x}$. Moreover, since $\delta_i(\vect{z}(t))$ is monotonically increasing in $\vect{z}$ and $\vect{z}(t)$ is monotonically increasing in $t$, then $\delta_i(\vect{z}(t))$ is a monotonically increasing function of time. This implies that $\delta_i(\vect{z}(t))>0$ for all $t\geq \tilde t$. This, together with Assumption~\ref{a:activation}, guarantees that $i$ switches to $+1$ (Proposition~\ref{prop:dynamics}) and cannot switch back (Theorem~\ref{th:convergence}), then $x_i(t)=+1$ for all $t\geq\tilde t+T$.

If $\mathcal A(2)=\mathcal A(1)$, then the system necessarily converges to the equilibrium $\vect{\hat z}$, yielding the claim. Otherwise,  $\vect{\hat z}$ is not an equilibrium. In this case, all individuals in $\mathcal A(2)\setminus\mathcal A(1)$ will necessarily switch action to $+1$ in finite time. Hence, we re-iterate considering the set $\mathcal A(2)$ and computing the corresponding $\vect{\vect{\hat z}}$, observing that, if $i$ has switched to $+1$, Theorem~\ref{th:convergence} guarantees that $i$ will never switch back, so we just need to check whether all $i\in\mathcal V\setminus\mathcal A(2)$ have $\delta_i(\hat z)\leq 0$ to get the terminal condition $\mathcal A(3)=\mathcal A(2)$, for which the system necessarily converges to the equilibrium $\vect{\vect{\hat z}}$. Otherwise, we re-iterate the process. Finally, in each iteration   $k$ in which the terminal condition is not met, the size of $\mathcal A(k)$ increases by at least $1$, implying that within at most $k=n-|\mathcal C^{X}|$ iterations we would get  $\mathcal A(k)=\mathcal V$, for which $\vect{\hat z}=(\vect{1},\vect{1})$ is a trivial equilibrium, terminating the algorithm.
\end{proof}

Theorem~\ref{th:convergence} implies that a controlled coevolutionary dynamics always converge to an equilibrium. Lemma~\ref{lem:equilibrium} implies that the equilibrium is independent of the activation sequence, but depends only on the model parameters and on the initial condition, which are  determined by $\mathcal C^{X}$ and $\mathcal C^{Y}$. Hence, fixed the parameters and given the $\mathcal C^{X}$ and $\mathcal C^{Y}$ either the system converges to $\vect{x}=\vect{1}$, implying $\phi(\mathcal C^{X},\mathcal C^{Y})=1$ or to any other equilibrium, implying $\phi(\mathcal C^{X},\mathcal C^{Y})=0$.

Finally, observe that \eqref{y-equilibrium} can be rewritten as $\vect{\hat{y}} = (\mat I - (\mat I-\text{diag}(\vect \lambda))\mat W)^{-1} \text{diag}(\vect \lambda)\vect{\hat{x}}$. The matrix $\mat M:=(\mat I - [(\vect 1-\vect \lambda)]\mat W)^{-1}$ does not depend on the iteration step, thus it can be computed once at the beginning of the iterations (such procedure requires $O(n^3)$ operations). Then, at each iteration of Algorithm~\ref{alg}, the dominant operation is the computation of $\vect{\hat{y}}$ which requires $O(n^2)$ operations. Since the number of iterations is at most $n-|\mathcal C^{X}|$, the total computational complexity of Algorithm~\ref{alg} is $O(n^3)$, yielding the claim. \qed

\end{document}